\newtheorem{theorem}{Theorem}[section]
\newtheorem{corollary}{Corollary}[theorem]
\newtheorem{lemma}[theorem]{Lemma}
\newtheorem{assumption}{Assumption}{\bf}{\it}
\newtheorem{remark}[theorem]{Remark}
\newtheorem{definition}[theorem]{Definition}
\newtheorem{example}[theorem]{Example}
\newtheorem{assumptions}[theorem]{Assumption}
\title{Structure-preserving quantum algorithms for linear and nonlinear Hamiltonian systems }
\author{Hsuan-Cheng Wu\thanks{wu.hsuancheng@psu.edu} \\
Xiantao Li\thanks{Xiantao.Li@psu.edu}\\
Department of Mathematics, \\ The Pennsylvania State University,\\  University Park, Pennsylvania 16802, USA
}
\def\R{\mathbb{R}}
\def\N{\mathbb{N}}
\def\e{\epsilon}
\def\l{\lambda}
\def\iff{\Longleftrightarrow}
\def\<{\langle}
\def\>{\rangle}
\newcommand\reallywidetilde[1]{\ThisStyle{%
  \setbox0=\hbox{$\SavedStyle#1$}%
  \stackengine{-.1\LMpt}{$\SavedStyle#1$}{%
    \stretchto{\scaleto{\SavedStyle\mkern.2mu\AC}{.5150\wd0}}{.6\ht0}%
  }{O}{c}{F}{T}{S}%
}}
\def\tilde#1{%
  \reallywidetilde{#1}
}
\newcommand{\mat}[1]{\begin{pmatrix}#1 \\ \end{pmatrix}}
\begin{document}

\maketitle

\begin{abstract}
    Hamiltonian systems of ordinary and partial differential equations are fundamental mathematical models spanning virtually all physical scales. A critical property for the robustness and stability of computational methods in such systems is the underlying symplectic structure, which preserves geometric properties like phase-space volume over time and energy conservation over an extended period. In this paper, we present quantum algorithms that incorporate symplectic integrators, ensuring the preservation of this key structure. We demonstrate how these algorithms maintain the symplectic properties for both linear and nonlinear Hamiltonian systems. Additionally, we provide a comprehensive theoretical analysis of the computational complexity, showing that our approach offers both accuracy and improved efficiency over classical algorithms. These results highlight the potential application of quantum algorithms for solving large-scale Hamiltonian systems while preserving essential physical properties.
\end{abstract}

\section{Introductions}

Hamiltonian systems are foundational across various physical scales, from electron dynamics to the motion of atoms and molecules, and even to macroscopic models in continuum mechanics \cite{meyer2009hamiltonian}. These systems are central to modern statistical mechanics \cite{Balescu76,Toda-Kubo-2}. Moreover, Hamiltonian systems encompass numerous well-known examples across diverse scientific disciplines: from the Schrödinger equation governing electron dynamics in quantum mechanics to the cubic Schrödinger equation describing Bose-Einstein condensation; from Newton's equations of motion in classical mechanics to Euler's equation and the Korteweg–de Vries equation in fluid mechanics. Furthermore, Hamiltonian dynamics finds applications in elastodynamics within solid mechanics, Lotka-Volterra equations modeling population dynamics, and even the complex motions of celestial bodies. This ubiquity highlights the pivotal role of Hamiltonian dynamics in understanding natural phenomena across multiple scales and fields of study. Further, the underlying variational and geometric structures have given rise to significant advancements in mathematical theory \cite{cabral2023normal}.

Mathematically, Hamiltonian systems are governed by the system of ODEs, involving generalized coordinates and momenta $\bigl(\bm q(t), \bm p(t) \bigr) \in \mathbb{R}^{2d}$
\begin{equation}\label{eq: ham-ode}
    \left\{
    \begin{aligned}
         \frac{d}{dt}\bm q=& \nabla_{\bm p} H, \\
         \frac{d}{dt}\bm p= & - \nabla_{\bm q} H,
    \end{aligned} \right.
\end{equation}
where $H(\bm q, \bm p)$ is known as the Hamiltonian. The ODE system in \cref{eq: ham-ode} can be written compactly as $\bm x'= -J \nabla H(\bm x)$ with $\bm x$ incorporating both sets of variables \cite{hairer2006geometric}. 

The transition from a Hamiltonian $H$ to the ordinary differential equations (ODEs) in \cref{eq: ham-ode} embodies the celebrated Hamilton's principle, equivalent to the fundamental least-action principle. This elegant formulation has been extended to infinite-dimensional systems, including partial differential equations (PDEs),  with many further applications.

The simulation of large-scale Hamiltonian systems remains a critical focus in scientific computing. A significant advancement in numerical algorithms is the development of symplectic integrators, which preserve the underlying geometric structure of these systems (see \cref{def-symp}). Notably, symplectic integrators maintain energy accuracy over extended time period, a crucial feature for long-term simulations. In contrast, conventional methods such as Taylor approximations and explicit Runge-Kutta schemes often lead to energy drift, compromising both the physical fidelity and long-term accuracy of simulations. Thus the development of symplectic integrators has become a milestone in modern computational mathematics \cite{feng2010symplectic,marsden2001discrete,hairer2006geometric}.  To illustrate the significance of symplectic integrators, \cref{fig:energy} presents a comparative analysis of total energy conservation in a large-scale particle system. The simulation, conducted over 100,000 time steps, contrasts the performance of a symplectic integrator (Verlet's method) against a standard Runge-Kutta method. The results demonstrate the superior energy conservation properties of the symplectic approach.
\begin{figure}[htp]
    \centering
    \includegraphics[width=0.5\linewidth]{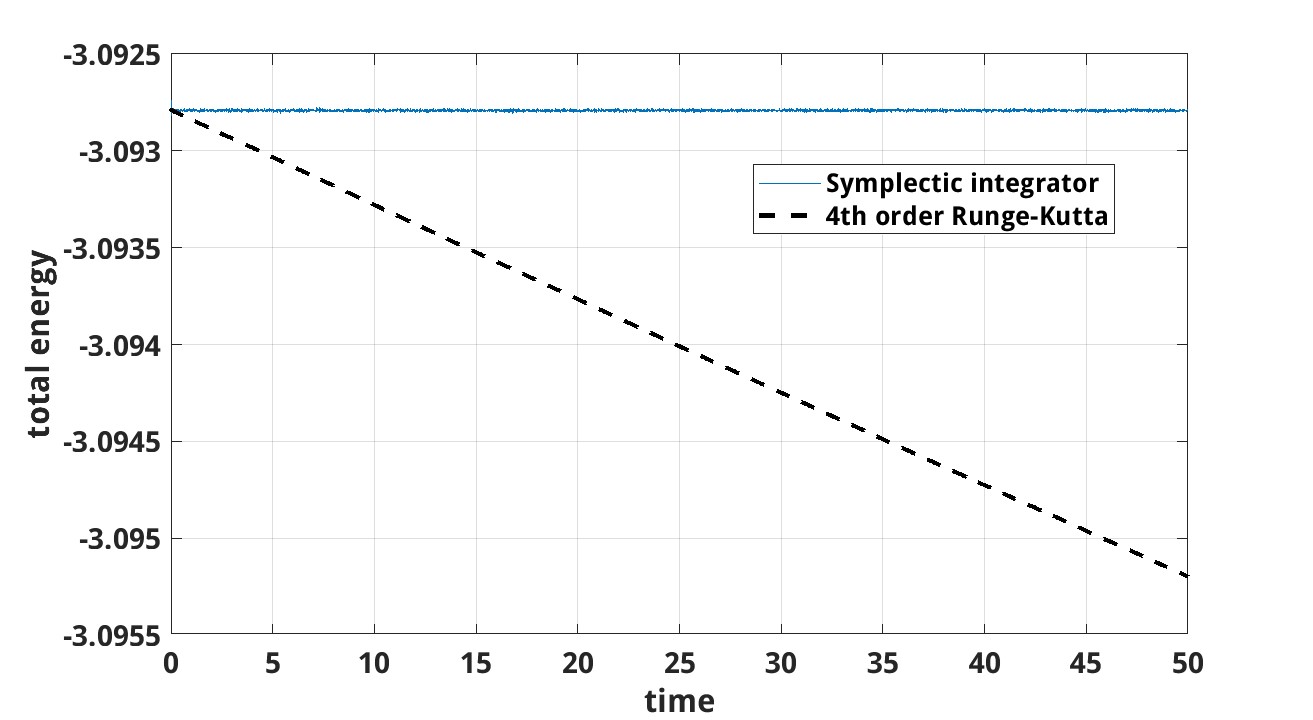}
    \caption{Total energy of a Lennard-Jones system with 2048 particles integrated in time for 100,000 steps.}
    \label{fig:energy}
\end{figure}

Quantum algorithms have shown remarkable potential for efficiently simulating high-dimensional systems. Notably, Hamiltonian simulation algorithms can simulate the unitary dynamics of the Schr\"odinger equation with exponential speedup relative to system dimension for many important cases. This quantum advantage has been extended to classical systems of harmonic oscillators \cite{BBK23} and the acoustic wave equation \cite{Costa2019Wave}, which are also Hamiltonian systems. Furthermore, quantum algorithms have been developed for simulating both linear and nonlinear ordinary differential equations (ODEs), see  \cite{an2023linear,berry2017quantum,berry2022quantum,childs2020quantum,fang2023time,jin2022quantumshort,krovi2023improved,liu2021efficient,wu2024quantum,succi2023quantum,tennie2025quantum,jin2022quantum,brearley2024quantum,joseph2020koopman} and the references therein.

Meanwhile, existing quantum algorithms have not explicitly addressed the symplectic structure inherent to Hamiltonian systems. Most of them still rely on Taylor approximation and simple Runge Kutta methods.  This paper addresses this gap by incorporating symplectic integrators into quantum algorithms and analyzing how the geometric structure impacts computational complexity. It is important to emphasize that 
while existing works have largely concentrated on optimizing the query and gate complexity of quantum algorithms, preserving the fundamental physical properties—similar to the evolution of classical simulation algorithms—remains equally important. When resources are insufficient to reach the desired precision, the robustness of these algorithms becomes increasingly critical in ensuring reliable outcomes.

\begin{figure}{thp}
    \centering
    \includegraphics[width=0.6\linewidth]{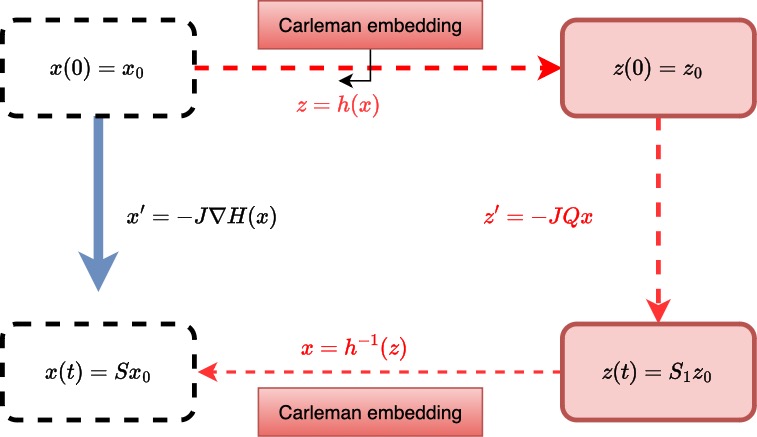}
    \caption{A schematic illustration of the Carleman embedding for Hamiltonian systems.}
    \label{fig:enter-label}
\end{figure}

For linear Hamiltonian dynamics, we incorporate the Runge-Kutta Gauss methods \cite{deuflhard2012scientific} which automatically satisfies the symplectic property, while having the optimal order of accuracy for a fixed stage number.
\begin{theorem}Assume that $\bm x(t)$ is the solution of a linear stable Hamiltonian system, expressed compactly as $\bm x'= K\bm x$.  
    The quantum algorithm produces a quantum state $\ket{\psi} = \frac{\bm x_M}{\norm{\bm x_M}}   $, in which $ \norm{\bm x_M - \bm x(T)}< \epsilon. $ The mapping from the initial value $\bm x(0)$ to $\bm x_M$ is symplectic. The algorithm involves   $ \Tilde{O}(T \|K\| \kappa(V)^2 )$ queries to $K$, where in $\Tilde{O}$ we neglected logarithmic factors and $\kappa(V)$ is the condition number associated with the eigenvector matrix of $K$. 
\end{theorem}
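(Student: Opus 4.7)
The plan is to combine three ingredients: the classical theory of symplectic Runge-Kutta Gauss integrators, a Pad\'e-style reformulation of one step as a rational function of $K$, and a quantum linear systems solver for evaluating that rational function. For a linear Hamiltonian problem $\bm x' = K\bm x$, the $s$-stage Gauss method produces the one-step propagator $R_s(hK) = P_s(hK)^{-1} Q_s(hK)$, where $R_s$ is the $(s,s)$-diagonal Pad\'e approximant to $e^z$. Symplecticity of the Gauss family (a classical result) immediately gives that $R_s(hK)$ is symplectic whenever $K$ is the coefficient matrix of a linear Hamiltonian system, so the iterate $\bm x_M = R_s(hK)^M \bm x_0$ satisfies the structural requirement of the theorem. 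For the accuracy bound, the order of the Gauss method is $2s$; combined with linear stability on the imaginary spectrum of $K$ and the standard global error estimate, one obtains $\|\bm x_M - \bm x(T)\| = O\!\left(T \kappa(V)\,\|K\|^{2s+1} h^{2s}\right)$. Fixing $h \lesssim 1/\|K\|$ and $s = \lceil \log(1/\epsilon) \rceil$ makes this $O(\epsilon)$ while adding only a logarithmic factor to the stage count.

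For the quantum realization of each step, I would block-encode $K$, use linear-combinations-of-unitaries or QSVT polynomial composition to assemble block encodings of $P_s(hK)$ and $Q_s(hK)$, and then invoke the optimal QLSA to invert $P_s(hK)$ against the intermediate vector $Q_s(hK)\bm x_j$. Since $K$ is diagonalizable as $V \Lambda V^{-1}$ with $\Lambda$ on the imaginary axis, the condition number of $P_s(hK)$ is bounded by $\kappa(V)$ up to a factor depending only on $s$, because $|P_s(iy)|$ is bounded above and below uniformly on bounded imaginary intervals. A second factor of $\kappa(V)$ enters when controlling the growth of $\|\bm x_j\|$: the spectral norm of $R_s(hK)^j$ scales like $\kappa(V)$ on the imaginary spectrum, which in turn governs the amplitude-amplification overhead that converts the block-encoded action into a preparable iterate. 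The per-step cost is therefore $\tilde{O}(\kappa(V)^2)$ queries to the block encoding of $K$, and $M = T\|K\|$ steps yields the announced $\tilde{O}(T \|K\| \kappa(V)^2)$ total.

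The main obstacle I anticipate is the compounding of per-step normalization and fidelity errors through the $M$ QLSA calls. Naively, guaranteeing end-to-end precision $\epsilon$ would require each step to achieve precision $\epsilon/M$, which introduces an extra $\log M$ factor that is tolerable within $\tilde{O}$ but must be tracked carefully. More delicate is the amplitude-amplification issue: to output the state $\ket{\psi} = \bm x_M/\|\bm x_M\|$ rather than an unnormalized vector, one needs \emph{a priori} control on $\|\bm x_M\|/\|\bm x_0\|$. Fortunately, the symplectic, quasi-unitary nature of $R_s(hK)$ on the imaginary spectrum bounds this ratio by $\kappa(V)$, so the amplitude stays under control and the exponential blow-up that typically obstructs iterative quantum ODE schemes is avoided, closing the query-complexity estimate.
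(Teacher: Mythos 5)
Your classical ingredients match the paper's: the Gauss collocation method, its symplecticity, the identification of $R$ with the diagonal Pad\'e approximant, the order-$2s$ local error combined with stability on the imaginary spectrum, and a logarithmic choice of the stage number so that $h\|K\|=O(1)$ and $M=O(T\|K\|)$. (Minor discrepancy: the paper's global error bound carries $\kappa(V)^2$ rather than your $\kappa(V)$ --- one factor from the quadrature-error integrand $e^{-sK}(\bm u'-K\bm u)$ and one from the accumulation $V\sum_m e^{m\tau\lambda}V^{-1}$ --- but this only enters logarithmically through the stage count, so it is harmless for the $\tilde O$ claim.) Where you genuinely diverge is the quantum realization. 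The paper does \emph{not} march in time with one QLSA call per step; it concatenates all $M$ steps into a single block-bidiagonal system $L\ket{\bm x}=\ket{\bm b}$ (\cref{A-lhs}), bounds $\kappa(L)=O(M\kappa(V)^2)$ by writing $L=I-P$ with $P$ nilpotent and using $\|R^j\|\le\kappa(V)$ (\cref{Cond L linear}), pads the history state to boost the probability of measuring the $\bm x_M$ block (\cref{History boosted}), and invokes QLSA once (\cref{lem:qls}). The one-step map is block-encoded not through $P_s,Q_s$ but through the Butcher-tableau form $R=I+(\bm b^T\otimes I)G^{-1}(\mathbf 1\otimes\tau K)$ with $\kappa(G)=O(\sqrt p)$ (\cref{thm: Construct R}); your alternative of inverting $P_s(hK)$ directly, with $\kappa(P_s(hK))$ controlled by $\kappa(V)$ and the nonvanishing of $P_s$ on the imaginary axis, is a legitimate substitute.

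The gap is in your sequential scheme's treatment of success probability. Bounding $\|\bm x_M\|/\|\bm x_0\|\le\kappa(V)$ is necessary but not sufficient: when you chain $M$ block-encoded applications of $R$ (each realized via a QLSA subroutine), the post-selection amplitude at step $j$ is $\|\bm x_{j+1}\|/(\alpha_j\|\bm x_j\|)$ where $\alpha_j$ is the block-encoding normalization, and since $\alpha_j$ generically exceeds the norm ratio by a constant factor, the overall success probability decays exponentially in $M$ unless each step is amplified to near-unit amplitude. Making that rigorous requires the uniform singular value amplification machinery of the time-marching framework \cite{fang2023time}, which you do not invoke, and whose cost accounting (a per-step amplification overhead that telescopes against the product of norm ratios) is exactly the delicate point your sketch elides. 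The paper's global-linear-system route sidesteps this entirely: there is a single post-selection whose success probability is controlled once, via the padding of \cref{History boosted}, by $\kappa(V)$ and the ratio $r/M$. As written, your argument that "the amplitude stays under control" because $R$ is quasi-unitary does not close the complexity estimate; either adopt the history-state formulation or supply the uniform amplification argument explicitly.
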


Meanwhile, 
at a high level, our algorithm for nonlinear dynamics is outlined in \cref{fig:enter-label}. Specifically, the solution of a Hamiltonian system corresponds to a symplectic map: $\bm x(t)= S \bm  x(0).$ We will show that the Carleman embedding implicitly induces a \emph{nonlinear} symplectic map $\bm z=\bm h(\bm x)$ that transforms the nonlinear dynamics to a \emph{linear} Hamiltonian system for which the symplectic structure remains and our algorithm in the previous theorem is applicable. We denote the symplectic map of the linear dynamics by $S_1$. Therefore, the original symplectic solution map is factored as follows,
\begin{equation}
    S = \bm h^{-1} \circ S_1  \circ \bm h,
\end{equation}
each of which is symplectic, a property that is invariant under composition.  Our symplectic integrators, when applied to the Carleman system, introduce an approximate, but still symplectic, map for $S_1.$ This result is summarized in the following.
\begin{theorem}[Informal Version of \cref{thm2}] Assume that the $n$ dimensional nonlinear Hamiltonian system satisfies a no-resonance condition and a condition on the strength of the nonlinearity $ \mathsf{R}_r <1$ (see \cref{assump1}). For  $T\le \frac{1}{e\norm{F_2}\kappa(V)^N}$,
 there is a quantum algorithm that produces a quantum state $\ket{\psi} = \frac{\bm x_M}{\norm{\bm x_M}}   $, in which $ \norm{\bm x_M - \bm x(T)}< \epsilon. $  The algorithm involves $ \Tilde{O}\left( \frac{T^{1+2\log ( \kappa(V))  }  }{\epsilon^{2\log (\kappa(V))  }} (\norm{F_1} + \norm{F_2}) \right)$ queries to the coefficient matrices in the Hamiltonian. The mapping $S$ from the initial value $\bm x(0)$ to $\bm x_M$ is approximately symplectic in the sense that
    \[  
    S^T J S - J = O(\epsilon). 
    \]
\end{theorem}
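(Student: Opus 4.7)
The plan is to realize the factorization $S=\bm h^{-1}\circ S_1\circ\bm h$ indicated just before the statement: construct the Carleman embedding, identify an extended symplectic form on the embedded space under which the resulting linear system is Hamiltonian, apply the preceding linear theorem, and then transfer both the accuracy and the symplectic defect back to the original coordinates.

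First I would write the nonlinearity as $\bm x'=F_1\bm x+F_2\bm x^{\otimes 2}$, introduce $\bm z=\bm h(\bm x)=(\bm x,\bm x^{\otimes 2},\ldots,\bm x^{\otimes N})$, and derive the truncated Carleman system $\bm z'=K_N\bm z+\bm r_N$, where $K_N$ is block upper-bidiagonal in $F_1,F_2$ and $\bm r_N$ collects terms of tensor order exceeding $N$. The crucial structural step is to exhibit a block-diagonal form $J_N$ on the embedded space, built from tensor powers of $J$, for which (i) $\bm h$ is a generalized symplectic map onto its image, i.e.\ $(D\bm h)^{T}J_N(D\bm h)=J$, and (ii) $K_N$ is Hamiltonian with respect to $J_N$, so that the truncated flow $S_1=\exp(TK_N)$ preserves $J_N$ exactly on $\operatorname{range}(\bm h)$.

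Next I would bound the Carleman truncation error under $\mathsf{R}_r<1$ using the prescribed time restriction $T\le 1/(e\|F_2\|\kappa(V)^N)$, which makes the residual decay geometrically in $N$. Choosing $N=\Theta(\log(T/\epsilon))$ drives this contribution below $\epsilon/2$ and gives $\kappa(V_N)^{2}=O(\kappa(V)^{2N})=O((T/\epsilon)^{2\log\kappa(V)})$, while $\|K_N\|=O(\|F_1\|+\|F_2\|)$. Applying the linear theorem to $\bm z'=K_N\bm z$ then produces a state proportional to some $\bm z_M$ with $\|\bm z_M-\bm z(T)\|\le\epsilon/2$ at cost $\Tilde O(T\|K_N\|\kappa(V_N)^{2})$; substituting the bounds and extracting the first block yields the claimed complexity $\Tilde O\bigl(T^{1+2\log\kappa(V)}\epsilon^{-2\log\kappa(V)}(\|F_1\|+\|F_2\|)\bigr)$ and accuracy $\|\bm x_M-\bm x(T)\|<\epsilon$.

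The approximate symplecticity $S^{T}JS-J=O(\epsilon)$ then follows from the factorization: because the discrete map $\hat S_1$ produced on the embedded space is exactly symplectic with respect to $J_N$, its pullback $\bm h^{-1}\circ\hat S_1\circ\bm h$ is exactly symplectic with respect to $J$, and the $O(\epsilon)$ defect arises only from the Carleman truncation together with the discrepancy between $\bm x_M$ and the first block of the exact pullback, both already controlled above. The main obstacle I anticipate is the structural step, namely identifying the correct $J_N$ and verifying that $K_N$ is Hamiltonian with respect to it on $\operatorname{range}(\bm h)$: the obvious tensor-power candidate typically fails off this image, so one must show that both the Carleman flow and the Gauss Runge--Kutta scheme preserve this invariant subspace. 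A secondary nuisance is calibrating $N$ carefully so that the exponent $2\log\kappa(V)$ emerges cleanly rather than being hidden inside an $O(\cdot)$.
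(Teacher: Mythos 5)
Your high-level architecture (Carleman embedding, apply the linear machinery, push the truncation error and the symplectic defect back to the original variables) matches the paper's, but two of your load-bearing steps are gaps, and in both places the paper takes a different and more concrete route.

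First, the symplecticity. You hang the entire structure-preservation claim on the existence of a block form $J_N$ built from tensor powers of $J$ for which $\bm h$ is generalized symplectic and $K_N$ is Hamiltonian, and you yourself flag that the obvious candidate fails off $\operatorname{range}(\bm h)$. That is exactly the missing idea: no such $J_N$ is constructed, and the claim that the discrete map on the embedded space is ``exactly symplectic with respect to $J_N$'' has nothing to stand on. The paper never needs this object. Instead it works directly with the Wronskian of the \emph{first block}, $W_{11}=\partial\bm y_1(t)/\partial\bm y_1(0)$ of the infinite Carleman system, and shows $\frac{d}{dt}\bigl(W_{11}^T J W_{11}\bigr)=0$ by a direct computation in which the cross terms $W_{21}^T A_{12}^T J W_{11}+W_{11}^T J A_{12}W_{21}$ cancel because $C$ is invariant under permutation of its indices and $W_{21}=W_{11}\otimes\bm x+\bm x\otimes W_{11}$. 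Hence $W_{11}$ is \emph{exactly} symplectic for the infinite system, and the $O(\epsilon)$ defect in \cref{symp-e} comes solely from the perturbation equation satisfied by $W_{11}^{(N)}-W_{11}$, which has the same structure as the Carleman truncation error and is bounded by \cref{car-convg}. You should replace your $J_N$ step with this Jacobian computation; otherwise the symplecticity claim is unproved.

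Second, the complexity. You assert $\kappa(V_N)^2=O(\kappa(V)^{2N})$ and then invoke the linear theorem verbatim. But $A^{(N)}$ is block upper bidiagonal, not block diagonal: its eigenvector matrix is not $V^{\otimes N}$ (or a direct sum of such), and its conditioning depends on the off-diagonal blocks $A_{j,j+1}$ and on the resonance gap $\Delta$, so this bound does not follow from anything you have established. The paper avoids diagonalizing $A^{(N)}$ altogether: in \cref{proof R} it splits $A^{(N)}=A_0+A_1$ with $A_1$ the ($N$-nilpotent) off-diagonal part, expands $e^{tA^{(N)}}$ by variation of constants into at most $N$ terms, bounds each by $\kappa(V)^{N}\cdot\frac{t^k}{k!}\bigl(\kappa(V)^N N\norm{F_2}\bigr)^k$, and uses the hypothesis $T\le \frac{1}{e\norm{F_2}\kappa(V)^N}$ to sum the series to $N\kappa(V)^N$. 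This is where the time restriction in the statement actually enters; in your write-up it plays no essential role, which is a sign the argument is not closed. The resulting bounds $\norm{R}=\tilde O\bigl((\epsilon/T)^{-\log\kappa(V)}\bigr)$ and $\kappa(L)=\tilde O\bigl(M(\epsilon/T)^{-2\log\kappa(V)}\bigr)$ (\cref{nonlinear: norm R}, \cref{nonlinear: cond L}) then give the stated query count via the QLSA lemma. Your calibration $N=\Theta(\log(T/\epsilon))$ and the final exponent are consistent with \cref{chooseN}, so the arithmetic at the end is fine once the norm bound on $R$ is obtained the right way.
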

It is important to emphasize that the commonly cited term ``Carleman linearization" is somewhat misleading: while the goal is to reduce the problem to a linear dynamical system, the underlying transformation itself is highly nonlinear. Therefore, we prefer the terminology ``Carleman embedding".

\paragraph{Main contributions}
\begin{itemize}
    \item We present a quantum algorithm for linear Hamiltonian systems with complexity that is optimal in both $T$ and $\epsilon$, while preserving the symplectic properties of the solution map. As a result, energy conservation is exactly maintained.
    \item Our algorithm implements the symplectic collocation methods, which in practice typically outperforms explicit Runge Kutta methods that involve the same number of states.
    \item Unlike the complexity estimates for general ODEs \cite{fang2023time,krovi2023improved}, our result does not depend on any time-varying parameters.
    \item We extend this structure-preserving property to nonlinear Hamiltonian systems with polynomial nonlinearity by utilizing the Carleman embedding technique, and quantify how the symplectic properties are affected by the Carleman embedding. 
    \item We analyze the complexity of the quantum algorithm for the finite Carleman system. 
    \item We demonstrate that the Carleman embedding implicitly induces a symplectic map, which, under our conditions, reduces the nonlinear Hamiltonian dynamics to a linear Hamiltonian system. Consequently, our quantum algorithm preserves the symplectic property up to an error $\epsilon$, which arises from the finite truncation of the Carleman system.
\end{itemize}

\paragraph{Related works}

{Numerous efficient quantum algorithms have been developed for solving linear ODE systems \cite{an2023linear,berry2017quantum,berry2022quantum,childs2020quantum,fang2023time,jin2022quantumshort,krovi2023improved}, with many of these algorithms reducing the problem to a Hamiltonian simulation framework—solving the unitary dynamics of a Schrödinger equation. These algorithms have been applied to harmonic oscillators \cite{BBK23} and the acoustic wave equation \cite{Costa2019Wave}, which are two examples of classical Hamiltonian systems. Quantum algorithms in this framework offer exponential speedups with respect to problem dimensions.  The recent work \cite{krovi2024quantum} has specifically considered Hamiltonian systems. However, quantum algorithms designed to handle Hamiltonian ODEs that preserve symplectic structures have not yet been explored in these prior works. 

In contrast, nonlinear Hamiltonian systems pose significantly more challenges due to their inherent complexity and interactions. One regime where quantum advantage has been identified is in dissipative dynamics using Carleman embedding \cite{carleman1932application}, where the real part of the Jacobian’s eigenvalues is strictly negative \cite{liu2021efficient}. Hamiltonian systems, due to their time-reversible nature, fall outside this regime. However, the authors' recent work \cite{wu2024quantum} bridges this gap by identifying new conditions under which quantum advantage remains achievable. This new condition will be used in the current paper as well. 

For general nonlinear dynamical systems, Br\"ustle and Wiebe \cite{brustle2024quantum} provided a lower bound on the computational complexity, showing that it scales exponentially with the time duration $T$. They also leveraged Carleman embedding to develop algorithms with the same complexity scaling.
}

\section{Preliminaries}\label{sec:pre}

This paper is concerned with the time integration in a time interval $[0,T]$. For the time discretization, we denote the step size by $\tau >0,$ and the time steps $t_n=n \tau, n \in [M]$, with $M=T/\tau.$ The corresponding approximation solution of $\bm x(t_n)$ is denoted by $\bm x_n.$

We use bold fonts for vectors, e.g., $\bm v$, and use $\norm{\bm v}$ to denote its Euclidean norm. Unless specified otherwise, $\norm{A}$ denotes the norm of a matrix $A$  induced by the vector norm.  We also use $\kappa(A):= \norm{A} \norm{A^{-1}}.$ In addition, $\norm{\bm v}_1$ refers to the $\ell_1$ vector norm, with the induced matrix denoted by   $\norm{A}_1$.

In addition, we provide the definition of symplectic mappings from \cite{hairer2006geometric}.
\begin{definition}\label{def-symp}
    A linear mapping $A:\R^{2d}\longrightarrow\R^{2d}$ is called symplectic if
    \[ A^TJA=J,  \]
    where 
    \begin{equation}\label{Jmat}
         J\coloneqq\mat{O& I_d\\ -I_d & O}. 
    \end{equation}
More generally, a differentiable map $f:\R^{2d}\longrightarrow\R^{2d}$ is called symplectic if the Jacobian matrix of $f$ is symplectic.
\end{definition}

\begin{definition}
    A numerical method, applied to any Hamiltonian system is called symplectic if the mapping from the initial condition, e.g., $\bm x_0$, to solutions at any later step, e.g., $\bm x_n$, is symplectic.
\end{definition}

In our algorithm, we will use block encoding as the input model. In particular, they are the building blocks for assembling large matrices and implementing matrix operations. 
Below we provide some known definitions and lemmas
of block encoding for the proposed implementing quantum algorithms.
\begin{definition}
    Let $A$ be an $n-$qubit operator, $\alpha,\epsilon>0$ and $m\in\N$. An $m+n$ qubit unitary $U$ is called an $(\alpha,m,\epsilon)$ block encoding of $A$ if
    \[ \|A-\alpha(\bra{0}^m\otimes I)U(\ket{0}^m\otimes I)\|\le\epsilon. \]
\end{definition}

\begin{lemma}[{\cite[Lemma 48]{GSLW19}}]
  \label{lemma:sparse-to-be}
  Let $A\in\mathbb{C}^{2^n\times 2^n}$ be an $n$-qubit operator with at most $s$ nonzero entries in each row and column. Suppose that $A$ is specified by the following sparse-access oracles:
  \begin{align}
    \label{eq:sparse-1}
    O_A: &\ket{i}\ket{j}\ket{0} \mapsto \ket{i}\ket{j}\ket{A(i, j)}, \text{ and}\\
    \label{eq:sparse-2}
    O_S: &\ket{i}\ket{k}\ket{0} \mapsto \ket{i}\ket{j}\ket{r_{i,k}},
  \end{align}
  where $r_{i,k}$ is the $k$-th nonzero entry of the $i$-th row of $A$. Suppose $|A_{i,j}| \leq 1$ for $i \in [m]$ and $j\in[n]$. Then for all $\epsilon \in (0, 1)$, an $(s, n+3, \epsilon)$-block-encoding of $A$ can be implemented using $O(1)$ queries to $O_A$ and $O_S$, along with $O(n+\mathrm{polylog}(1/\epsilon))$ 1- and 2-qubit gates.
\end{lemma}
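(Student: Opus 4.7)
The plan is to give the standard two-oracle-plus-controlled-rotation construction, organized around building a unitary $U$ whose top-left block equals $A/s$ up to precision $\epsilon$.

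First, I would allocate three ancilla registers on top of the two $n$-qubit index registers $\ket{i}\ket{j}$: a $\lceil \log_2 s\rceil$-qubit ``slot'' register for enumerating the at-most-$s$ nonzero entries in a row or column, a single-qubit ``amplitude'' register, and a short arithmetic register to temporarily hold $A(i,j)$ with enough precision. The goal is to realize a unitary $U$ such that
\[
(\bra{0}^m \otimes I\otimes I) \, U \, (\ket{0}^m \otimes I\otimes I) \;=\; \tfrac{1}{s}\,A + E,
\]
with $\|E\|\le \epsilon/s$, after which multiplication by $s$ immediately gives the claimed $(s, n+3, \epsilon)$-block-encoding.

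Second, I would build the core gadget in three substeps. (i) Starting from $\ket{0}^m\ket{i}\ket{j}$, apply a Hadamard (or $s$-uniform preparation) on the slot register to obtain $s^{-1/2}\sum_k \ket{k}\ket{i}\ket{j}\ket{0}\ket{0}$. (ii) Use $O_S$ controlled on the slot register and the row register to write the column index $r_{i,k}$ of the $k$-th nonzero entry of row $i$ into a fresh register: this produces $s^{-1/2}\sum_k \ket{k}\ket{i}\ket{r_{i,k}}\ket{j}\ket{0}\ket{0}$. (iii) Query $O_A$ on $(i,r_{i,k})$ to load $A(i,r_{i,k})$ into the arithmetic register, then apply a controlled rotation on the amplitude qubit so that the qubit becomes $A(i,r_{i,k})\ket{0}+\sqrt{1-|A(i,r_{i,k})|^2}\ket{1}$ up to $O(\epsilon)$ error; uncompute $O_A$. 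This part uses $O(n+\mathrm{polylog}(1/\epsilon))$ additional one- and two-qubit gates, accounting both for the arithmetic register width and for the standard digit-by-digit controlled rotation that approximates a generic amplitude to accuracy $\epsilon$.

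Third, I would close the construction by matching the column index. Apply a SWAP between the $\ket{r_{i,k}}$ register and the $\ket{j}$ register, then uncompute the slot-preparation step on the slot register (or, equivalently, conjugate by an appropriate right-preparation unitary $U_R$ while building the mirror image $U_L^\dagger$ for the row side). Projecting on $\ket{0}^m$ in all ancilla registers forces $k$ to be the unique slot with $r_{i,k}=j$, so the surviving amplitude is $A(i,j)/s$ when $(i,j)$ is a nonzero entry, and zero otherwise. Symmetry between the row and column sparsity is where the at-most-$s$-per-column hypothesis is used.

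The main obstacle is ensuring that the whole gadget is genuinely unitary and that the ``bookkeeping'' cancels: naively, $O_S$ is only guaranteed to be a permutation for the nonzero slots, so one has to treat the at-most-$s$ case (rather than exactly-$s$) by padding with dummy indices and verifying that those contribute zero amplitude to the block. A subtler point is the error analysis: the $\epsilon$ in the controlled rotation step propagates through the projection, and one must check that $\|E\|\le \epsilon/s$ survives the two index registers (a simple triangle-inequality argument using $|A_{ij}|\le 1$). Once these points are settled, the query and gate counts follow by inspection, completing the lemma.
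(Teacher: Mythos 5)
The paper does not prove this lemma at all: it is quoted directly from \cite[Lemma~48]{GSLW19} and used as a black box. Your sketch is a faithful reconstruction of the standard construction in that reference (uniform superposition over the $s$ slots, an $O_S$ query to compute the column index, an $O_A$ query plus controlled rotation to load the amplitude $A(i,j)/s$, followed by uncomputation and conjugation by the row/column preparation unitaries), so it matches the cited proof in approach and in the resulting query, ancilla, and gate counts.
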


\begin{definition} \label{Hermitian Complement}
    The Hermitian complement of matrix $A$ is defined by
    \[ \bar{A}\coloneqq\mat{0 & A \\ A^\dag & 0}. \]
\end{definition}

\begin{lemma}[\cite{chakraborty2018power} Lemma 9, \cite{krovi2023improved} Remark 1] 
\label{lem:BE-inv}
Let $A\in \mathbb{C}^{2^n\times 2^n}$ be a matrix with condition number $\kappa(A)\ge2$. Suppose $\frac{I}{\kappa(A)}\preceq \bar{A}\preceq I$. For any fixed $\epsilon>0$, we define $\delta\coloneqq o\left(\frac{\varepsilon}{\kappa(A)^2\log^3(\frac{\kappa(A)^2}{\epsilon})}\right)$, then if $U$ is an $(\alpha,n,\delta)$ block encoding of $\bar{A}$ with gate complexity $T_U$, we can implement a 
\[ \left(2\kappa(A),n+O(\log(\kappa(A)^2\log\varepsilon^{-1})), \epsilon\right) \]
block encoding of $\bar{A}^{-1}$ with gate complexity 
\[ O\left(\alpha\kappa(A)(n+T_U)\log^2(\frac{\kappa(A)^2}{\epsilon})\right). \]
Furthermore, a block encoding of $A^{-1}$ can be obtained from the upper-right block of $U$, in that, 
\[ \norm{A^{-1}-2\kappa(A)(\bra{0}^m\otimes \bra{1}\otimes I)U(\ket{0}^m\otimes\ket{0}\otimes I)}\le\epsilon. \]
\end{lemma}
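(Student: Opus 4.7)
The plan is to realize $\bar A^{-1}$ as a polynomial in $\bar A$ and to invoke the quantum singular value transformation (QSVT) of Gilyén--Su--Low--Wiebe to implement that polynomial using the given block encoding $U$ of $\bar A$. Because $\bar A$ is Hermitian with eigenvalues contained in $[-1,-1/\kappa(A)]\cup[1/\kappa(A),1]$ (under the spectral assumption, interpreted for the absolute values of the signed singular values of $A$), the task reduces to finding an odd polynomial $P$ that approximates $x\mapsto 1/x$ uniformly on that set while satisfying $\|P\|_{[-1,1]}\le 1/(2\kappa(A))$, so that $2\kappa(A)P$ is a legitimate QSVT target. The standard construction---truncating the Chebyshev expansion of a smoothed inverse as in Childs--Kothari--Somma---yields such a $P$ of odd degree $d=O(\kappa(A)\log(\kappa(A)/\epsilon))$ with uniform error of order $\epsilon/\kappa(A)^2$ on the relevant spectral support.

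Given $P$, applying QSVT to $U$ produces a $(1,n+a,\epsilon')$ block encoding of $P(\bar A)$, with ancilla overhead $a=O(\log d)=O(\log(\kappa(A)^2\log\epsilon^{-1}))$ coming from the phase-factor register. Absorbing the $2\kappa(A)$ rescaling into the block-encoding normalization yields the claimed $(2\kappa(A),n+a,\epsilon)$ block encoding of $\bar A^{-1}$. The total cost is $d$ uses of $U$ and its inverse, each of gate complexity $T_U$, plus $O(n)$ per layer for the controlled phase rotations; because QSVT is effectively applied to $\bar A/\alpha$, the degree-dependent cost picks up an extra multiplicative factor of $\alpha$, together delivering the stated bound $O\bigl(\alpha\kappa(A)(n+T_U)\log^2(\kappa(A)^2/\epsilon)\bigr)$.

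The error analysis balances two contributions: the approximation error of $P$ against $1/x$, which is $O(\epsilon/\kappa(A))$ once $d=O(\kappa\log(\kappa/\epsilon))$, and the propagation of the initial $\delta$ block-encoding error through the QSVT circuit, which is amplified by a factor $O(d)$ in operator norm. Requiring both to be $O(\epsilon/\kappa(A))$, so that after the $2\kappa(A)$ rescaling the total error is $O(\epsilon)$, yields the stated threshold $\delta=o(\epsilon/(\kappa(A)^2\log^3(\kappa(A)^2/\epsilon)))$, with the extra polylogarithmic factors arising from careful constant-tracking in the Chebyshev bound. For the last claim, the block structure $\bar A^{-1}=\bigl(\begin{smallmatrix}0 & (A^\dag)^{-1}\\ A^{-1} & 0\end{smallmatrix}\bigr)$ places $A^{-1}$ in a known off-diagonal block, and projecting the ancilla registers with $\bra{0}^m\otimes\bra{1}$ on the left and $\ket{0}^m\otimes\ket{0}$ on the right extracts the $A^{-1}$ block while inheriting the $\epsilon$ error.

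The main obstacle I expect is not QSVT itself (by now a standard black box) but the explicit polynomial construction: one needs an odd $P$ that simultaneously (i) approximates $1/x$ to accuracy $\epsilon/\kappa(A)^2$ on $[1/\kappa,1]\cup[-1,-1/\kappa]$, (ii) has sup-norm $\le 1/(2\kappa(A))$ on all of $[-1,1]$ so QSVT returns a valid block encoding, and (iii) admits explicit degree and error bounds whose constants are tight enough to match the stated $\delta$-threshold. The mollified-inverse-plus-Chebyshev-truncation recipe accomplishes all three, but propagating constants through the bound $\delta=o(\epsilon/(\kappa^2\log^3(\kappa^2/\epsilon)))$ is the most technical step. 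Once that approximation result is in place, the remainder is a direct application of the QSVT subroutine together with elementary block-encoding arithmetic.
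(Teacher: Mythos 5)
The paper does not prove this lemma itself---it is imported verbatim from the cited references (Chakraborty--Gily\'en--Jeffery, Lemma 9, and Krovi, Remark 1)---and your sketch correctly reconstructs the standard argument used there: an odd Chebyshev-type polynomial approximation of $1/x$ on $[-1,-1/\kappa]\cup[1/\kappa,1]$ of degree $O(\alpha\kappa\log(\kappa/\epsilon))$, implemented via QSVT on the block encoding of $\bar{A}$, with the $2\kappa(A)$ rescaling, the $\delta$-threshold from error propagation through the circuit, and the extraction of $A^{-1}$ from the off-diagonal block of $\bar{A}^{-1}$ all handled as in the sources. This matches the intended proof, so there is nothing to add beyond the constant-tracking you already flag as the technical step.
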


\begin{lemma}[Product and Linear Combinations of Unitaries \cite{childs2012hamiltonian}]
\label{lem:BE-lcu}
Let $A$ and $B$ have an $(\alpha,n,\epsilon)$ block encoding with gate complexity $T_A$ and a $(\beta,m,\delta)$ block encoding with gate complexity $T_B$, respectively. Then 
\begin{enumerate}
    \item[(a)] We can implement an $(\alpha,n+1,\epsilon)$ block encoding of $\bar{A}$ with gate complexity $O(T_A)$.
    \item[(b)] There exists an $(\alpha+\beta,n+m,\alpha\delta+\beta\epsilon)$ block encoding of $A+B$ with gate complexity $O(T_A+T_B)$.
    \item[(c)] There exists an $(\alpha\beta,n+m,\alpha\delta+\beta\epsilon)$ block encoding of $AB$  with gate complexity $O(T_A+T_B)$. 
\end{enumerate}
\end{lemma}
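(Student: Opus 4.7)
The plan is to treat the three parts sequentially, using one ancilla-qubit LCU for the sum, direct composition for the product, and a controlled-$U/U^\dagger$ trick for the Hermitian complement. Throughout, for a block encoding $U$ of $A$ I write $\Pi_n=(\ket{0}\bra{0})^{\otimes n}$ and use the characterization $\|A-\alpha(\bra{0}^n\otimes I)U(\ket{0}^n\otimes I)\|\le\epsilon$ as the working definition.

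For part (a), I would adjoin a single ancilla qubit $\ket{c}$ and define the $(n+1)$-ancilla unitary as a controlled combination of $U$ and $U^\dagger$ together with an $X$ on the extra system qubit $\ket{b}$: something of the form $\bar U=\mathrm{CNOT}_{b\to c}\cdot(\ket{0}\bra{0}_c\otimes U+\ket{1}\bra{1}_c\otimes U^\dagger)\cdot\mathrm{CNOT}_{b\to c}$. Projecting onto $\ket{0}^{n+1}_{\mathrm{anc}}$ shows that for $\ket{b}=\ket{0}$ the top-right block $A/\alpha$ appears, while for $\ket{b}=\ket{1}$ the bottom-left block $A^\dagger/\alpha$ appears, matching $\bar A/\alpha$. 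Since $U^\dagger$ has the same gate count as $U$ and the extra controls cost $O(1)$ gates, the gate complexity is $O(T_A)$, and the approximation error is inherited term-by-term from the $\epsilon$-guarantee on $U$.

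For part (b), I would use the textbook LCU circuit: introduce a one-qubit prepare register with $P\ket{0}=\frac{1}{\sqrt{\alpha+\beta}}(\sqrt{\alpha}\ket{0}+\sqrt{\beta}\ket{1})$, pad $U$ and $V$ to act on the joint ancilla register (identity on the register they do not use), and define $W=(P^\dagger\otimes I)(\ket{0}\bra{0}\otimes U\otimes I+\ket{1}\bra{1}\otimes I\otimes V)(P\otimes I)$; the prepare qubit can be merged into the first ancilla of one of the two block encodings so the total ancilla count stays at $n+m$. Projecting onto $\ket{0}^{n+m}$ gives $\tfrac{\alpha}{\alpha+\beta}\cdot(A/\alpha)+\tfrac{\beta}{\alpha+\beta}\cdot(B/\beta)=(A+B)/(\alpha+\beta)$, so $W$ is an $(\alpha+\beta)$-normalized block encoding; the error weighting $\alpha\delta+\beta\epsilon$ follows by substituting $A=\alpha\,\Pi_n U\Pi_n+E_A$, $B=\beta\,\Pi_m V\Pi_m+E_B$ with $\|E_A\|\le\epsilon,\|E_B\|\le\delta$ and applying the triangle inequality. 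For part (c), I would simply compose $W=U\cdot V$ after padding each factor with identity on the other's ancilla register. A short calculation uses that $\Pi_m$ (on $V$'s ancillas) commutes with the padded $U$ to give $(\bra{0}^{n+m}\otimes I)W(\ket{0}^{n+m}\otimes I)=(A/\alpha)(B/\beta)=AB/(\alpha\beta)$; for the error, expanding $AB-\alpha\beta\,\Pi_{n+m}UV\Pi_{n+m}=E_A\cdot(\beta\,\Pi_m V\Pi_m)+(\alpha\,\Pi_n U\Pi_n)\cdot E_B+E_A E_B$ and dropping the second-order $\epsilon\delta$ term yields the bound $\alpha\delta+\beta\epsilon$.

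The main obstacle I expect is not any deep calculation but bookkeeping: verifying in (b) and (c) that the padded block encodings really act as the identity on the ancillas they do not own, that the single $P$-ancilla in (b) can be absorbed without inflating the ancilla count beyond $n+m$, and that the projector $\ket{0}^{n+m}\bra{0}^{n+m}$ factorizes across the two sub-registers so that the block entry is exactly the sum or product of the constituent block entries. Once the ancilla layout is correct, the subnormalizations, error weightings, and $O(T_A+T_B)$ gate counts follow immediately.
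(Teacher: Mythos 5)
The paper offers no proof of this lemma at all --- it is quoted from the literature (\cite{childs2012hamiltonian}; in the form stated it is essentially Lemmas 52--54 of \cite{GSLW19}) --- so the comparison is against the standard constructions, which is what you are reproducing. Parts (b) and (c) are essentially correct: padding to $(U\otimes I_m)(I_n\otimes V)$ and observing that the projector $\ket{0}\bra{0}^{\otimes(n+m)}$ factors across the two disjoint ancilla registers is the standard argument, and your error bookkeeping is consistent with the stated bounds (for the sum, the triangle inequality actually gives $\epsilon+\delta$, which implies $\alpha\delta+\beta\epsilon$ whenever $\alpha,\beta\ge 1$). The one claim in (b) you should not wave away is the ancilla count: the select qubit cannot be ``merged into the first ancilla of one of the two block encodings,'' because $U$ requires all $n$ of its ancillas to be initialized to and projected onto $\ket{0}$, whereas the prepare qubit carries the state $\frac{1}{\sqrt{\alpha+\beta}}(\sqrt{\alpha}\ket{0}+\sqrt{\beta}\ket{1})$; the honest count from your circuit is $n+m+1$. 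This is harmless for everything the paper uses the lemma for, but as written that step is unjustified.

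The genuine error is in part (a). Your circuit $\mathrm{CNOT}_{b\to c}\,\bigl(\ket{0}\bra{0}_c\otimes U+\ket{1}\bra{1}_c\otimes U^\dagger\bigr)\,\mathrm{CNOT}_{b\to c}$ never changes the state of the extra system qubit $b$: the CNOTs copy $b$ into the ancilla $c$ and then uncompute it, and the selected operation acts only on $U$'s registers. Hence $\bra{b'}(\cdot)\ket{b}$ vanishes unless $b'=b$, and after projecting the ancillas onto $\ket{0}$ you obtain $\ket{0}\bra{0}_b\otimes A/\alpha+\ket{1}\bra{1}_b\otimes A^\dagger/\alpha$, i.e.\ the block-\emph{diagonal} matrix $\mathrm{diag}(A,A^\dagger)/\alpha$ rather than the antidiagonal Hermitian complement $\bar A$ of \cref{Hermitian Complement}; your sentence asserting that the top-right and bottom-left blocks appear is where the computation goes wrong. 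The fix is a one-gate modification: precede the controlled operation by an $X$ on $b$, so that the selected unitary becomes $\ket{0}\bra{1}_b\otimes U+\ket{1}\bra{0}_b\otimes U^\dagger$ (still unitary), whose $\ket{0}^{\otimes n}$-ancilla block is exactly $\bar A/\alpha$; the gate count $O(T_A)$ and the error $\epsilon$ are unaffected, but as written your construction block-encodes the wrong operator.
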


\begin{lemma}[\cite{krovi2023improved} Theorem 1]\label{complexity lemma}
\label{lem:qls}
Suppose $\|A\|=O(1)$ so $\|A^{-1}\|=O(\kappa(A))$. Suppose an oracle for block encoding of $A$ and an oracle for preparing $\ket{\bm b}$ are provided. There exists a quantum algorithm that outputs the normalized state $A^{-1}\ket{\bm b}$ within error $\epsilon$ by using $O(\kappa(A)\log\epsilon^{-1})$ calls to the oracles.   
\end{lemma}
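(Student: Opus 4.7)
The plan is to combine the quantum singular value transformation (QSVT) framework for matrix inversion with an amplitude amplification strategy that is tailored to matrix inversion, achieving the optimal linear-in-$\kappa$ scaling. Since $\|A\|=O(1)$, after absorbing constants into the block-encoding normalization one may assume the nonzero singular values of $A$ lie in $[1/\kappa(A),1]$, so the task reduces to implementing a (subnormalized) block encoding of $A^{-1}$ supported on this spectral band.

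The first step is to construct an odd polynomial $P(x)$ of degree $d$ satisfying $|P(x)-c/x|\le\epsilon'$ uniformly for $x\in[1/\kappa,1]$, with $|P(x)|\le 1$ on $[-1,1]$ and $c=\Theta(1/\kappa)$ chosen to keep $P$ bounded. Standard shifted-Chebyshev constructions (as in Childs--Kothari--Somma or Gily\'en--Su--Low--Wiebe) furnish such a $P$ with
\[
d = O\!\left(\kappa\,\log(1/\epsilon')\right).
\]
Applying QSVT to the block encoding of $A$ with phase factors encoding $P$ then yields a block encoding of $c\,A^{-1}$ to error $\epsilon'$ using $O(d)$ queries to the block encoding of $A$, by the standard QSVT cost analysis.

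Next, I would prepare $\ket{\bm b}$ using its oracle and apply this QSVT block encoding. Post-selecting the ancilla on $\ket{0}^{m}$ produces the state $A^{-1}\ket{\bm b}/\norm{A^{-1}\bm b}$ with success probability $\Omega(\norm{A^{-1}\bm b}^2/\kappa^2)$. Ordinary amplitude amplification would require $O(\kappa/\norm{A^{-1}\bm b})$ rounds, giving the suboptimal total cost $O(\kappa^2\log(1/\epsilon))$. Removing this extra $\kappa$ factor is the main obstacle and the core content of the lemma. The plan is to use variable-time amplitude amplification in the spirit of Ambainis: decompose the QSVT polynomial filter for $A^{-1}$ into a sequence of sub-filters associated with dyadic spectral bands $[2^{-k-1},2^{-k}]$, so that components of $\ket{\bm b}$ lying in easier bands are amplified with cheaper filters and only the hardest band pays the full $\kappa\log(1/\epsilon)$ cost. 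A careful accounting then shows the weighted sum of (cost) $\times$ (amplification rounds) telescopes to $O(\kappa\log(1/\epsilon))$.

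The hard part is precisely this last step, i.e.\ verifying that the variable-time schedule's diabatic and projection errors, together with the amplification errors, accumulate to a final state within $\epsilon$ of $A^{-1}\ket{\bm b}/\norm{A^{-1}\bm b}$. An alternative route, which avoids the bookkeeping of variable-time amplification and is closer to the approach of Krovi, is the discrete adiabatic theorem: construct a one-parameter family of block-encoded Hamiltonians $H(s)$ with gap $\Omega(1/\kappa)$ whose final ground state is proportional to $A^{-1}\ket{\bm b}$, discretize the adiabatic path into $O(\kappa\log(1/\epsilon))$ steps (each a single block-encoding call), and bound the cumulative non-adiabatic error by a summation-by-parts argument. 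Either route produces a quantum state $\epsilon$-close to $A^{-1}\ket{\bm b}/\norm{A^{-1}\bm b}$ using $O(\kappa(A)\log\epsilon^{-1})$ queries to the block encoding of $A$ and to the preparation oracle of $\ket{\bm b}$, as claimed.
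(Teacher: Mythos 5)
This lemma is not actually proved in the paper: it is imported verbatim as Theorem~1 of \cite{krovi2023improved}, which in turn rests on the optimal quantum linear system solvers built from the discrete adiabatic theorem together with eigenstate filtering. So there is no in-paper argument to compare against, and the relevant question is whether your sketch would establish the stated $O(\kappa(A)\log\epsilon^{-1})$ bound on its own. As written, it would not, for two reasons.

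First, the QSVT-plus-variable-time-amplitude-amplification route (the Childs--Kothari--Somma style argument you outline) is known to yield $O\bigl(\kappa\,\mathrm{polylog}(\kappa/\epsilon)\bigr)$ rather than $O(\kappa\log(1/\epsilon))$: the dyadic band decomposition and the variable-time amplification bookkeeping each contribute additional $\log\kappa$ factors, and you explicitly defer the ``careful accounting'' that constitutes the entire content of that argument. Second, your adiabatic alternative conflates two different error scalings. The discrete adiabatic theorem gives a final-state error of order $\kappa/T$ after $T$ steps, so driving the adiabatic error itself down to $\epsilon$ would cost $O(\kappa/\epsilon)$ queries, not $O(\kappa\log(1/\epsilon))$. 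The optimal algorithm instead runs the adiabatic schedule for only $O(\kappa)$ steps to land within a \emph{constant} distance of the target state, and then applies an eigenstate-filtering polynomial of degree $O(\kappa\log(1/\epsilon))$ to boost the overlap; the entire $\log(1/\epsilon)$ dependence lives in the filtering stage, which your sketch omits. Since the lemma is used in this paper purely as a black box, the cleanest course is to cite \cite{krovi2023improved} (or the underlying optimal-QLSA literature) rather than reprove it; if you do want a self-contained proof, the missing ingredient is the eigenstate-filtering step appended to a constant-accuracy adiabatic evolution.
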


\section{Symplectic Quantum Algorithms for Linear Hamiltonian Systems}

We first consider a linear Hamiltonian system with a quadratic Hamiltonian,
\begin{equation}
    H(\bm x)= \bm x^T Q \bm x, \quad Q\in \mathbb{R}^{2d\times 2d},\quad Q^T=Q, \quad \bm x=(\bm q, \bm p).
\end{equation}

In this case, the Hamiltonian ODEs in \cref{eq: ham-ode} can be written in the following compact form,
 \begin{equation}\label{linear ode}
     \frac{d}{dt} \bm x =K \bm x, \quad K:=J^{-1}Q.
 \end{equation}
Here $J$ is defined in \eqref{Jmat}.

\begin{assumptions}
    The Hamiltonian dynamics given in \cref{linear ode} is stable. In particular, there exists a constant $C$ such that $\norm{e^{tK }}< C$ for all $t\in \mathbb{R}$.
\end{assumptions}

\begin{assumptions} \label{sparsity K}
    The matrix $K$ is s-sparse. In other words, there are at most $s$ nonzero entries in each row and column of $K$, with sparse access described \cref{lemma:sparse-to-be}.
\end{assumptions}

The stability is directly related to the diagonalizability of the matrix $K$.
\begin{theorem}{\cite[Theorem 2.1]{meyer2013normally}}\label{as-stab}
    The Hamiltonian dynamics \cref{linear ode} is stable if and only if the matrix $K$ is diagonalizable with purely imaginary eigenvalues. 
\end{theorem}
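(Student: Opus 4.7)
The plan is to prove both directions via a spectral decomposition argument, since stability of the flow $e^{tK}$ is most naturally controlled by the Jordan structure of $K$.

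For the easy direction (sufficiency), suppose $K$ is diagonalizable with purely imaginary spectrum, so $K = V D V^{-1}$ with $D = \mathrm{diag}(i\mu_1, \ldots, i\mu_{2d})$ and $\mu_j \in \mathbb{R}$. Then $e^{tK} = V e^{tD} V^{-1}$, and since each diagonal entry of $e^{tD}$ has modulus $|e^{it\mu_j}| = 1$, one has $\|e^{tD}\| = 1$ in any unitarily invariant norm, hence $\|e^{tK}\| \le \kappa(V)$ uniformly in $t \in \mathbb{R}$. This yields the stability bound with the explicit constant $C = \kappa(V)$, which is exactly the constant entering the complexity theorems above.

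For the necessity direction, I would argue by contrapositive in two steps. First, suppose $K$ has an eigenvalue $\lambda$ with $\mathrm{Re}(\lambda) \neq 0$. Picking a corresponding eigenvector $\bm v$, we get $e^{tK}\bm v = e^{t\lambda}\bm v$, so $\|e^{tK}\bm v\| = e^{t\, \mathrm{Re}(\lambda)} \|\bm v\|$, which blows up as $t \to +\infty$ or $t \to -\infty$ (stability here must be two-sided because Hamiltonian flows are time-reversible, which is the essential point that rules out the ``merely dissipative'' case). So all eigenvalues lie on the imaginary axis. Second, suppose $K$ is not diagonalizable; then there is a Jordan block of size $k \ge 2$ at some purely imaginary eigenvalue $i\mu$, and restricted to the corresponding invariant subspace $e^{tK}$ acts as $e^{it\mu}(I + tN + \tfrac{t^2}{2}N^2 + \cdots)$ with $N$ nilpotent and nonzero. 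The polynomial factor forces $\|e^{tK}\| \to \infty$ as $|t| \to \infty$, contradicting stability. Hence $K$ must be diagonalizable with purely imaginary spectrum.

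There is no real obstacle in this proof; it is a textbook application of the Jordan form together with the observation that Hamiltonian stability must be two-sided. The only thing worth emphasizing for the paper is that the sufficiency argument gives the quantitative bound $\|e^{tK}\| \le \kappa(V)$, which is precisely the quantity showing up in the complexity estimate $\tilde{O}(T\|K\|\kappa(V)^2)$ in the theorem on linear Hamiltonian systems, so the diagonalization $K = V D V^{-1}$ established here is the same object referenced later. One could alternatively invoke the structural fact that eigenvalues of a Hamiltonian matrix come in pairs $\pm\lambda$ (since $K = J^{-1}Q$ satisfies $K^T J + JK = 0$), but this is not needed for the logical content of the statement and I would omit it.
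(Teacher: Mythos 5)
Your proof is correct, but note that the paper does not actually prove this statement: it is quoted verbatim from \cite[Theorem~2.1]{meyer2013normally}, so there is no in-paper argument to compare against. Your Jordan-form argument is the standard self-contained proof of this classical fact, and it is consistent with how the paper uses the result immediately afterwards: the sufficiency direction is exactly the derivation of the bound $\norm{e^{tK}}\le\kappa(V)$ in \eqref{stable K}, and your emphasis that stability must be two-sided (boundedness for all $t\in\R$, not just $t\to+\infty$) correctly reflects the paper's Assumption~1 and is indeed the point that distinguishes the Hamiltonian setting from the dissipative one. Two minor nitpicks, neither of which affects the logic: (i) the claim that $\norm{e^{tD}}=1$ ``in any unitarily invariant norm'' is false as stated (the Frobenius norm of a $2d\times 2d$ unitary is $\sqrt{2d}$); you only need, and only use, the spectral norm, for which the identity is correct and which is the norm the paper works with. (ii) In the necessity step, the eigenvector $\bm v$ may be complex while the dynamics lives on $\R^{2d}$; this is harmless because the operator $2$-norm of a real matrix is the same over $\R^{2d}$ and $\C^{2d}$, but it is worth a word. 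Your closing observation about the $\pm\lambda$ symmetry of the spectrum of $K=J^{-1}Q$ is correct and rightly omitted as unnecessary.
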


Due to this stability property, we can express $K$ in a diagonalized form,
\begin{equation}\label{diagonalization K}
    K= V \Lambda V^{-1},
\end{equation}
and $\Lambda$ only has purely imaginary eigenvalues.  Therefore $e^{tK } = V e^{t\Lambda} V^{-1}$. Since $e^{t\Lambda}$ is unitary, we have a natural global 
bound,
\begin{equation}\label{stable K}
    \norm{e^{tK }} \leq \kappa(V), \; \forall t\in \mathbb{R}.
\end{equation}

Another possible characterization is through a symplectric map:  $K= B S B^{-1},$ where  
$B$ is symplectic and $S$ is skew-symmetric. In this case, we have $ \norm{e^{tK }} \leq \kappa(W). $

\subsection{The Runge-Kutta Gauss methods}

One class of ODE solvers that preserve the symplectic structure is the Runge-Kutta Gauss (RKG) methods, which are implicit methods with intermediate stages defined at the Gaussian quadrature points within each time interval. A key reason for choosing RKG methods is due to its optimal accuracy for a fixed stage number $p$ \cite{deuflhard2012scientific}. Another practical advantage is that RKGs are one-step methods, which, unlike multistep methods, do not require multiple initial starting values. Most importantly, any RKG method preserves the symplectic structure.

We first show the general steps of an $p$-stage RKG method, $p\in \mathbb{N}$. Suppose we have a $d$-dimensional ODE system $\bm x'=\bm f(t,\bm x)$ on space $\Omega$. We construct a polynomial function $\bm u(t) \in \mathbb{P}_{p}^d$ that collocates with $p$ given points given by Gauss-Lengendre quadrature points $\{c_j\}$ in $(0,1)$.  Specifically,  for each time step with step size $\tau$, to map the numerical solution $\bm x_n$ to $\bm x_{n+1}$, the method can be formulated as follows, find $\bm u(t) \in \mathbb{P}_{p}^d$ such that,
\begin{enumerate}
    \item[(1)] $\bm u(t)=\bm x_n$.
    \item[(2)] $\bm u'(t+c_i\tau)=\bm  f(t+c_i\tau,\bm u(t+g_i\tau))$, for $i=1,\cdots,p$.
\end{enumerate}
With $\bm u$ determined, one then sets  $\bm x_{n+1}= \bm u(t+\tau)$.
Those properties of the RKG that are relevant to the current paper are summarized below.
\begin{theorem}
    The RKG method, when applied to the linear Hamiltonian system \cref{linear ode}, produces a rational transformation,
      \begin{equation}\label{approx recursion}
          \bm x(t_{n+1}) \approx R(\tau K) \bm x(t_{n}), 
      \end{equation}
with the following properties,      
      \begin{enumerate}
          \item \cite[Lemma 6.41]{deuflhard2012scientific} The rational approximation satisfies $R(\tau K) = e^{\tau K} + O(\tau^{2p+1}).$ 
          \item \cite[Lemma 6.20]{deuflhard2012scientific}The eigenvalues of $R
          $ have modulus equal to 1.   In particular, the discrete evolution in \cref{approx recursion} is stable, thus inheriting the property in \eqref{stable K}.
          \item \cite[Lemma 6.37]{deuflhard2012scientific} The method is invariant under a similarity transformation.
          \item \cite[Theorem 1.5 and 2.1]{hairer2006geometric} The method preserves linear and quadratic physical invariants. 
          \item \cite[Theorem 4.2]{hairer2006geometric} $R(\tau K)$  is a symplectic map for any $\tau \in \mathbb{R}.$ 
          \item \cite[Theorem 3.3]{hairer2006geometric}. For general Hamiltonian systems \eqref{eq: ham-ode}, the RKG method conserves an approximate Hamiltonian $\widetilde{H}$ for all time $t$ and $\widetilde{H}- H= O(\tau^{2p})$. Furthermore, $\widetilde{H}\equiv H$ for linear Hamiltonian systems.
      \end{enumerate}
\end{theorem}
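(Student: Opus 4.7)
The plan is to obtain a single closed-form expression for the one-step map and then read each of the six properties from it or cite the standard source. Writing the Butcher tableau of the $p$-stage Gauss scheme as $(A,b,c)$ with $A\in\mathbb{R}^{p\times p}$, applying it to $\bm x' = K\bm x$ gives the stage system $(I_p\otimes I - \tau A\otimes K)\bm U = \mathbf{1}\otimes \bm x_n$ and the update $\bm x_{n+1} = \bm x_n + \tau(b^T\otimes K)\bm U$. Eliminating $\bm U$ produces $\bm x_{n+1} = R(\tau K)\bm x_n$ with the scalar stability function $R(z) = 1 + z\, b^T(I-zA)^{-1}\mathbf{1}$. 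For Gauss nodes this is the $(p,p)$-Pad\'e approximant of $e^z$, and (1) is then its defining error bound, while (2) reduces to the classical fact that diagonal Pad\'e approximants satisfy $|R(iy)|=1$ for real $y$; combined with \cref{as-stab}, which ensures that $\tau K$ has purely imaginary spectrum, this yields spectral radius one and hence stability. Property (3) is automatic because $R$ is a matrix function, so $R(\tau \, T^{-1}KT) = T^{-1}R(\tau K)T$.

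Properties (4) and (5) both rest on the Gauss symplecticity identity $b_i a_{ij} + b_j a_{ji} = b_i b_j$. A direct bilinear computation with the stage equations gives $\bm x_{n+1}^T C \bm x_{n+1} = \bm x_n^T C \bm x_n$ for any symmetric $C$ for which $CK$ is skew; taking $C = Q$ yields conservation of $H$, and a parallel calculation with $J$ in place of $C$ (using $K = J^{-1}Q$) yields $R(\tau K)^T J R(\tau K) = J$. Linear invariants are preserved by every consistent Runge-Kutta method, so that piece of (4) is trivial. For (6) in the linear case, property (4) applied with $C = Q$ already forces exact preservation of the physical Hamiltonian, so $\widetilde{H}\equiv H$; the $O(\tau^{2p})$ bound in the nonlinear setting is the standard backward error analysis for symplectic integrators and I would simply cite \cite[Thm.~3.3]{hairer2006geometric}.

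The main obstacle is (5), since writing symplecticity of the rational map as a quadratic identity in the coefficients of $R$ is less transparent than the fixed-point argument used in the nonlinear RK literature. The cleanest route is the generating-function derivation from Hairer-Lubich-Wanner: one works with the collocation polynomial $\bm u(t)\in\mathbb{P}_p^d$, differentiates $\bm u(t)^T J \bm u(t)$, integrates from $t_n$ to $t_n+\tau$ using the Gauss quadrature (which is exact up to degree $2p-1$), and invokes the coefficient identity above to show the increment vanishes. Every remaining step is bookkeeping: consistent indexing of the tensor factors, and an appeal to \cref{stable K} to transport the modulus-one spectrum of $R$ into the uniform bound needed elsewhere in the paper.
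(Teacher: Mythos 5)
Your proposal is essentially correct, but note that the paper gives no proof of this theorem at all: each of the six items is attributed directly to a specific lemma or theorem in \cite{deuflhard2012scientific} and \cite{hairer2006geometric}, and the theorem functions as a citation compilation. What you have done is reconstruct the standard arguments behind those citations, and your reconstruction is sound: the stability function $R(z)=1+zb^T(I-zA)^{-1}\mathbf{1}$ of the $p$-stage Gauss method is the $(p,p)$-Pad\'e approximant of $e^z$, which gives item (1); the identity $R(z)R(-z)=1$ for diagonal Pad\'e approximants gives $|R(iy)|=1$ and hence item (2) once combined with \cref{as-stab}; item (3) is immediate for any matrix rational function; and items (4)--(6) all follow from the algebraic condition $b_ia_{ij}+b_ja_{ji}=b_ib_j$ satisfied by Gauss coefficients, with the observation that $QK=QJ^{-1}Q$ is skew-symmetric so that $H=\bm x^TQ\bm x$ is itself a quadratic invariant, which is exactly why $\widetilde H\equiv H$ in the linear case. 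One small slip in your sketch of item (5): the quantity $\bm u(t)^TJ\bm u(t)$ is identically zero because $J$ is skew-symmetric, so differentiating it proves nothing. The collocation argument must be run on the bilinear form $\bm u(t)^TJ\bm v(t)$ between two collocation polynomials (equivalently, on the variational/Jacobian form); its derivative $\bm u'^TJ\bm v+\bm u^TJ\bm v'$ has degree $2p-1$, is integrated exactly by the $p$-point Gauss rule, and vanishes at the nodes since $K^TJ+JK=0$, which yields $R(\tau K)^TJR(\tau K)=J$. With that correction your outline is a complete and self-contained proof of a theorem the paper chose only to cite.
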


The first property establishes the error scale relative to the step size, while the second property pertains to temporal symmetry. The third property is important for Newtonian mechanics since it preserves the property of conservation of linear and angular momentum. The last property, which remains valid for all time, has proven to be the most significant advantage of symplectic integrators.

We first give examples of $1$-stage and $2$-stage Runge-Kutta Gauss Collocation method. 
\begin{example}[Implicit midpoint method ($p=1$)]
\begin{equation}
    R(z)= \frac{2+z}{2-z}. 
\end{equation}
 
\end{example}
\begin{example}[$p=2$]  In this case, the rational function is given by
    \begin{equation}
        R(z)=\frac{z^2+6z+12}{z^2-6z+12}.
    \end{equation}
Notice that this is precisely the Pad\'e approximation of $e^z$ with order $(2,2)$. 
\end{example}

In general, the function $R(z)$ for the implicit Runge-Kutta method is expressed as
\begin{equation}
    R(z) = 1+z\bm b^T (I-z\mathcal{A})^{-1}\bm1,
\end{equation}
where $\bm b^T$ and $\mathcal{A}$ are from Butcher tableau \cite{butcher1964implicit}, which lists all parameters in a Runge-Kutta method. $\bm1$ is a vector of dimension $p$ with all ones in all entries. 

For general ODEs, the $O(\tau^{2p})$ error has been proved in \cite{deuflhard2012scientific}, but without the prefactor which might depend on $p.$
We first provide a more precise error bound when applying RKG to the linear dynamics in  \cref{linear ode}.

\begin{theorem}
    The $p$-stage Runge-Kutta Gauss method produces a symplectic mapping from $\bm x_0$ to the numerical solution $\bm x_M$ at any later time step $T=t_M$, $t_M=M\tau$, and it has a global error with the following bound,
    \begin{equation}
      \norm{\bm x(T) - \bm x_M } =     O\left( \frac{T \kappa(V)^2 (\norm{K}\tau)^{2p}}{p! 2^{2p}} \right),
    \end{equation}
  where $\kappa(V)$ is the condition number of the eigenvector matrix of $K$ in \cref{diagonalization K}.  
\end{theorem}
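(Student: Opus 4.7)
The plan is to split the theorem into its two claims: that $\bm x_0\mapsto \bm x_M$ is symplectic, and the global error estimate. The symplectic part is immediate from property (5) of the preceding theorem, which gives that each stepping map $R(\tau K)$ is symplectic. Since the linear-symplectic condition $A^T J A = J$ is preserved under composition (the product $(AB)^T J (AB) = B^T(A^T J A) B = B^T J B = J$), the $M$-fold composition $R(\tau K)^M$ is symplectic, so it only remains to bound $\bm x_M - \bm x(T) = \bigl(R(\tau K)^M - e^{TK}\bigr)\bm x_0$ in norm.

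For the error, the first step is structural. Because $R$ is a rational function and $K = V\Lambda V^{-1}$, property (3) gives $R(\tau K) = V R(\tau\Lambda) V^{-1}$, and the same for the exponential. Therefore
\[
R(\tau K)^M - e^{TK} = V\bigl(R(\tau\Lambda)^M - e^{M\tau\Lambda}\bigr)V^{-1},
\]
which costs one $\kappa(V)$. Since $\Lambda$ is diagonal with purely imaginary entries by \cref{as-stab} and the diagonal entries of $R(\tau\Lambda)$ have modulus one by property (2), I would telescope inside the diagonal:
\[
R(\tau\Lambda)^M - e^{M\tau\Lambda} = \sum_{j=0}^{M-1} R(\tau\Lambda)^{j}\bigl(R(\tau\Lambda) - e^{\tau\Lambda}\bigr) e^{(M-1-j)\tau\Lambda},
\]
so that every factor outside the middle one is a diagonal matrix of unit-modulus entries. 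This reduces the global error to $M\,\kappa(V)\,\max_i |R(\tau\lambda_i) - e^{\tau\lambda_i}|$ times $\|\bm x_0\|$; the second $\kappa(V)$ in the final bound arises naturally when routing the bound on $\bm x_0$ through the eigenbasis of $K$, i.e., writing $\|\bm x_0\| \le \|V\|\|V^{-1}\bm x_0\|$ or analogously accumulating a stability constant over $M$ steps via $\|R(\tau K)^j\|\le \kappa(V)$.

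The quantitative core, and the main obstacle, is an imaginary-axis estimate of the Padé error $|R(\tau\lambda) - e^{\tau\lambda}|$ with the right constants. Here I would use that the $p$-stage Gauss method realizes precisely the $(p,p)$-Padé approximation of $e^z$, whose integral remainder
\[
e^z - R(z) = \frac{z^{2p+1}}{(2p)!}\int_0^1 t^p(1-t)^p e^{zt}\,dt
\]
yields, on the imaginary axis, $|R(i y) - e^{iy}| \le |y|^{2p+1}\,(p!)^2/[(2p)!(2p+1)!]$ via the Beta-function identity $\int_0^1 t^p(1-t)^p\,dt = (p!)^2/(2p+1)!$. Setting $y = \tau\lambda_i$ and using $|\lambda_i| \le \|K\|$ with combinatorial simplifications such as $\binom{2p}{p}\gtrsim 2^{2p}/(2p+1)$ and $(2p)!\ge p!\, 2^{2p}\cdot(\text{polynomial factors})$ would massage the scalar estimate into the form $(\|K\|\tau)^{2p+1}/(p!\,2^{2p})$. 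Multiplying by $M = T/\tau$ and the $\kappa(V)^2$ factor then gives the stated global bound. I expect the main delicacy to lie in this constant bookkeeping, especially in matching the precise $1/(p!\,2^{2p})$ prefactor; by contrast the symplecticity, the similarity-invariance reduction, and the telescoping are routine given the properties already established.
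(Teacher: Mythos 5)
Your argument is sound but follows a genuinely different route from the paper. The paper works directly with the collocation structure: the one-step defect is written as $e^{\tau K}\int_0^{\tau}e^{-sK}\bigl(\bm u'(s)-K\bm u(s)\bigr)\,ds$, which vanishes under Gaussian quadrature because $\bm u$ collocates at the Gauss points, so the local error is exactly the quadrature error; this is then bounded via the $(2p)$-th derivative of the integrand (using $\bm u^{(q)}=0$ for $q\ge p$) and accumulated by a discrete variation-of-constants formula, with one $\kappa(V)$ entering through $\norm{e^{-sK}}$ in the local bound and a second through the propagation $e^{m\tau K}=Ve^{m\tau\Lambda}V^{-1}$. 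You instead diagonalize once, telescope $R(\tau\Lambda)^M-e^{M\tau\Lambda}$ between unit-modulus diagonal factors, and reduce everything to the scalar Pad\'e estimate $\abs{R(iy)-e^{iy}}$ on the imaginary axis. Your route is arguably cleaner and in fact yields only a single factor of $\kappa(V)$ (the telescoped middle term needs no further conjugation), which implies the stated bound since $\kappa(V)\ge 1$; the paper's collocation argument has the advantage of not requiring the identification of $R$ with the diagonal Pad\'e approximant and of generalizing beyond the linear case. Two points to tighten: (i) your claim that the second $\kappa(V)$ ``arises naturally when routing $\norm{\bm x_0}$ through the eigenbasis'' is not needed and as written does not produce one --- simply note your bound is sharper; (ii) the integral remainder you quote omits the reciprocal of the Pad\'e denominator, i.e.\ the correct identity has the form $e^z-R(z)=\frac{(-1)^p z^{2p+1}}{(2p)!\,Q_{p,p}(z)}\int_0^1 t^p(1-t)^p e^{zt}\,dt$; on the imaginary axis with $\tau\norm{K}=O(1)$ one has $\abs{Q_{p,p}(iy)}\ge 1$ (or at least $\Theta(1)$), so the asymptotics survive, but the formula should be stated correctly, and you should also justify (or cite) that the $p$-stage Gauss stability function is exactly the $(p,p)$-Pad\'e approximant, which the paper only exhibits for $p=2$.
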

\begin{proof}
   The first part was proven in \cite{hairer2006geometric}. This proof relies on the property of RKG methods that they preserve any quadratic invariant. For the second part, we examine the proof in \cite{deuflhard2012scientific}, and incorporate the general error bound for Gaussian quadrature. 

    We begin with the analysis of the one-step error, which for RKG, has been reduced to the quadrature error of an integral \cite{deuflhard2012scientific}, 
    \begin{equation}\label{error eq}
        \hat{\bm \varepsilon}_n\coloneqq e^{\tau K} \bm x_n - R(\tau K) \bm x_n= e^{\tau K} \int_0^{\tau} e^{-s K } \big(\bm  u'(s) - K \bm u(s) \big) ds 
    \end{equation}
    Here $\bm u(s)$ is a polynomial of degree at most $p$ that satisfies the ODE \cref{linear ode} exactly at $p$ points. The idea is that if these points are chosen based on Gaussian quadrature points, the integral can be approximated by a Gaussian quadrature formula with maximal precision. On the other hand, because of the interpolation property of $\bm u$, the quadrature approximation is zero. In general, Gaussian quadrature for approximation of an integral of $f$ over an interval of size $\tau$ follows the error bound (denoted by $E[f]$),
    \begin{equation}
     E[f]=   \frac{\tau^{2p+1} (p!)^4 }{(2p+1)! \big((2p)!\big)^2 } \norm{f^{(2p)}}= O\left(   \frac{\tau^{2p+1}  }{(2p+1)! 2^{4p} } \norm{f^{(2p)}} \right). 
    \end{equation}

    From \cref{error eq}, the function is given by $\bm f(s)=e^{-s K } \big( \bm u'(s) - K \bm u(s) \big). $ As a result, we have,
      \begin{equation}\label{bound f 2p}
         \norm{ f^{(2p)} } \leq  \sum_{q=0}^{2p}  \binom{2p}{q}\left( \norm{K}^{2p-q} \norm{\bm u^{(q+1)}} + \norm{K}^{2p-q+1} \norm{\bm u^{(q)}} \right)\kappa(V).
      \end{equation} 
    Notice that $\bm u^{(q)}=0$ for $q\ge p$ and $\|\bm u^{(q)} - \bm x^{(q)}\| = O(\tau^{q+1})$  for $q=0,\cdots,p-1$, by \cite[Lemma 1.6]{hairer2006geometric}.
    Thus from \cref{bound f 2p}, we have
    \begin{equation}
        \|f^{(2p)}\|=O\left(\kappa(V)\sum_{q=0}^{p}\binom{2p}{q}\|K\|^{2p}\right)=O\left(\kappa(V)p2^{2p}\|K\|^{2p}\right).
    \end{equation}
    This leads to an error bound given by,
    \begin{equation}
        E[f]=O\left(\frac{p\tau^{2p+1}}{(2p+1)!2^{2p}}\|K\|^{2p}\kappa(V)\right).
    \end{equation}

    With the one-step error established, we proceed to estimate the global error, using the standard trick \cite{deuflhard2012scientific}. Specifically, we first notice that the exact solution operator $e^{tK}$, follows the bound  
      \begin{equation}\label{eq: sens}
           \norm{  e^{\tau K} \bm x_n - e^{\tau K} \bm x(t_n) } \leq \norm{e^{\tau K}} \norm{\bm x_n-\bm x(t_n)}.  
      \end{equation}

We now express the global error as follows,
\[
\bm x_{n+1} - \bm x(t_{n+1}) = \underbrace{\Big( R(\tau K) \bm x_{n} -   e^{\tau K} \bm x_{n}  \Big)}_{=: \hat{\bm \varepsilon}_n}+ \Big(  e^{\tau K} \bm x_{n}  - e^{\tau K} \bm x(t_n)  \Big).
\]
The first term on the right hand side is precisely the one-step error, while the second term can be estimated from the stability of the exact solutions of the ODEs, i.e. \cref{eq: sens}. Using a discrete variation of constant formula, we find that,
\begin{equation}
    \bm x_{M} - \bm x(t_{M}) = \sum_{m=0}^M e^{m \tau K } \bm \varepsilon_{M-m}. 
 \end{equation}
Before we take the norm and apply the submultiplicative property of matrice norms, we first recall that $K$ is diagonalizable, $K= V \Lambda V^{-1},$ and $\Lambda$ only has purely imaginary eigenvalues. Therefore,
\begin{equation}
    \bm x_{M} - \bm x(t_{M}) = V \sum_{m=0}^M e^{m \tau \lambda } V^{-1} \bm \varepsilon_{M-m}   
    \Rightarrow \norm{ \bm x_{M} - \bm x(t_{M}) } = O\left( \kappa(V)^2 M \frac{ p\tau^{2p+1}}{(2p+1)!2^{2p}}\|K\|^{2p} \right). 
\end{equation}
      
    Therefore, the global error is what is stated.

\end{proof}

With this theorem, we can choose $\tau \norm{K} = O(1) $ and  
\begin{equation}\label{parameter p}
    p= \frac12  \frac{ \log \frac{T \kappa(V)  }{\epsilon} }{ \log \log \frac{T\kappa(V) }{\epsilon}}
\end{equation} to ensure that the RKG method produces a solution within error $\epsilon. $

\subsection{The quantum algorithm and its complexity}

To solve the recursive equation \cref{approx recursion}, we rewrite it into a linear system by concatenating the equations from every time step, which subsequently can be solved by the quantum linear system algorithm (QLSA). For this purpose, we define a linear operator $L$, where 
\begin{equation}\label{A-lhs}
       L = \mat{I & 0 & 0 & \cdots & 0\\
       -R & I & 0 & \cdots &0 \\
       0 & -R & I &\cdots & 0 \\
       \vdots & \vdots & \ddots & \ddots & \vdots \\
       0 & 0 & \cdots & -R & I 
       }= \sum_{m=0}^M\ketbra{m} \otimes I -  \sum_{m=1}^M\ketbra{m}{m-1} \otimes R.
   \end{equation}

As a result, all the equations in \cref{approx recursion} can be reduced into a combined linear system of equations,  
\begin{equation}\label{linear system L}
    L\ket{\bm x}=\ket{\bm b},
\end{equation}
where $\ket{\bm x}$, known as the history state,  is given by
\begin{equation} \label{vector x}
    \ket{\bm x} = \frac{1}{\sqrt{M+1}} \sum_{n=0}^M \ket{\bm x_n} \ket{t_n}, 
\end{equation}
and 
\begin{equation}
    \ket{\bm b} = \ket{\bm x_0,0,\cdots,0}.
\end{equation}
Notice that the first row of the matrix $L$ in \cref{A-lhs} enforces the initial condition. 

\begin{lemma}\label{Cond L linear}
    The condition number of $L$ follows the bound,
\begin{equation}
    \kappa(L) = O\bigl(M \kappa(V)^2 \bigr). 
\end{equation}    
\end{lemma}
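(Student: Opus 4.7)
The plan is to bound $\|L\|$ and $\|L^{-1}\|$ separately, exploiting the block-bidiagonal structure of $L$ together with the spectral information on $R = R(\tau K)$ provided by properties (2) and (3) of the RKG method (modulus-1 eigenvalues, invariance under similarity transformations).

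First I would write $L = I - N$, where
\[
 N = \sum_{m=1}^{M} \ket{m}\bra{m-1}\otimes R
\]
is strictly lower block-triangular, hence nilpotent with $N^{M+1}=0$. Since $K = V\Lambda V^{-1}$ with $\Lambda$ purely imaginary, I get $R(\tau K) = V\,R(\tau\Lambda)\,V^{-1}$, and $R(\tau\Lambda)$ is diagonal with unit-modulus entries by property (2) of RKG. This gives the uniform bound
\[
 \|R^k\| \;=\; \|V\,R(\tau\Lambda)^k\,V^{-1}\| \;\le\; \|V\|\,\|V^{-1}\| \;=\; \kappa(V)
 \qquad\text{for every }k\ge 0.
\]
In particular $\|N\| = \|R\| \le \kappa(V)$, so by the triangle inequality $\|L\| \le 1 + \|R\| = O(\kappa(V))$.

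Next I would invert $L$ using the fact that the Neumann series terminates because $N$ is nilpotent:
\[
 L^{-1} \;=\; \sum_{k=0}^{M} N^{k}, \qquad N^{k} \;=\; \sum_{m=k}^{M}\ket{m}\bra{m-k}\otimes R^{k}.
\]
Because the block-shift part of $N^k$ acts isometrically, we have $\|N^k\| = \|R^k\|$, and the uniform bound above yields
\[
 \|L^{-1}\| \;\le\; \sum_{k=0}^{M}\|N^{k}\| \;\le\; (M+1)\,\kappa(V).
\]
Multiplying the two estimates produces the desired conclusion
\[
 \kappa(L) \;=\; \|L\|\,\|L^{-1}\| \;=\; O\!\bigl(M\,\kappa(V)^{2}\bigr).
\]

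The only delicate point is the bound on $\|R^k\|$: although each $R^k$ has all eigenvalues on the unit circle, $R$ itself is generally non-normal, so one cannot simply say $\|R^k\|\le 1$. The essential observation is that diagonalization through the \emph{same} eigenvector matrix $V$ used for $K$ works for every power and every rational function of $K$, so the condition number $\kappa(V)$ absorbs the non-normality once and for all, independently of $k$ and $\tau$. This is what prevents the naive triangle-inequality estimate of the Neumann series from blowing up geometrically in $M$ and gives the clean linear-in-$M$ dependence.
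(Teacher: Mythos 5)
Your proposal is correct and follows essentially the same route as the paper's proof: split $L = I - P$ with $P$ nilpotent, bound $\|L\|$ by $1+\|R\|$, terminate the Neumann series for $L^{-1}$, and control every power $\|R^k\|\le\kappa(V)$ via the simultaneous diagonalization $R(\tau K)=V R(\tau\Lambda)V^{-1}$ with unit-modulus diagonal entries. Your closing remark making explicit why the non-normality of $R$ is absorbed by a single factor of $\kappa(V)$, uniformly in $k$, is a useful clarification of a point the paper states only implicitly.
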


\begin{proof}
   Define $P\coloneqq I-L$. Then $L=I-P$ and $P$ is $(M+1)-$nilpotent, i.e., $P^n=O,\forall n\ge M+1$. Then 
   \begin{equation}
      \| L^{-1}\|=\left\|(I-P)^{-1}\right\|=\left\|\sum_{j=0}^M P^j\right\|\le 1+\sum_{j=1}^M \|R^j\|\le 1+M\kappa(V),
   \end{equation}
   where $\kappa(V)=\|V\|\|V^{-1}\|$ and $R(\tau K)=V\Lambda(\tau K)V^{-1}$ with $\Lambda$ be a diagonal matrix.
   Furthermore,
   \begin{equation}
       \|L\|=\|I+P\|\le 1+ \|R\|\le 1+ \kappa(V).
   \end{equation} 
   Then the condition number is as desired.
\end{proof}

\begin{remark}
   The result in \cite{krovi2023improved}, which also employed QLSA, does not involve $\kappa(V)^2$. Instead, $\sup\limits_{t\in[0,T]}\norm{e^{Kt}}$ and $g:=\frac{\max_{t\in[0,T]}\norm{\bm x(t)}}{\norm{\bm x(T)}}$ appear in the complexity. In our case, $K$ is diagonalizable and thus $\norm{e^{Kt}}$ is replaced by a time-independent parameter $\kappa(V)$ by the stability property \cref{stable K}. Further, due to the time reversibility in the property  \cref{stable K}, we also have $g\leq \kappa(V)$. Compared to \cite{krovi2023improved}, the current method preserves the symplectic property. 
\end{remark}

To boost the probability of collapsing to $\bm x_M$, we incorporate the standard techniques used in \cite{berry2014high, childs2020quantum,krovi2023improved,liu2021efficient,xue2021quantum}. We repeat $\bm x_M$ for additional $r$ times, following \cref{History boosted}. In practice, one can choose $r=O(M)$ to achieve a finite probability.

\begin{lemma}\label{History boosted}
    If we extend \cref{vector x} to 
    \begin{equation}
        \ket{\bm x} = \frac{1}{\sqrt{M+1+r}}\left(\sum_{n=0}^{M}\ket{\bm x_n}\ket{t_n}+\sum_{n=M+1}^{M+r}\ket{\bm x_M}\ket{t_n}\right),
    \end{equation}
    by adding extra r rows of zeros in the $\ket{\bm b}$,
    then $\ket{\bm x}$ satisfies a corresponding linear system of equations $\tilde{L}\ket{\bm x}=\ket{\bm b}$ with 
    \begin{equation}
        \kappa(\tilde{L})=O\left((M+r)\kappa(V)^2\right).
    \end{equation}
\end{lemma}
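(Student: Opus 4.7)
The plan is to mirror the Neumann-series argument used for \cref{Cond L linear}. First I would write down the extended coefficient matrix $\tilde{L}$ explicitly: the first $M+1$ block rows are unchanged from $L$ in \cref{A-lhs} (so they enforce the initial condition and the $M$ recurrences $\bm x_{n+1}=R\bm x_n$), while each of the $r$ additional rows enforces the trivial identity $\bm x_{n+1}=\bm x_n$, which is consistent with repeating $\bm x_M$ in the last $r$ slots. Concretely,
\begin{equation*}
\tilde{L} = \sum_{m=0}^{M+r}\ketbra{m}\otimes I \;-\; \sum_{m=1}^{M}\ketbra{m}{m-1}\otimes R \;-\; \sum_{m=M+1}^{M+r}\ketbra{m}{m-1}\otimes I,
\end{equation*}
and by construction the boosted history state solves $\tilde{L}\ket{\bm x}=\ket{\bm b}$ with $\ket{\bm b}$ equal to $\ket{\bm x_0}\ket{0}$ padded by $M+r$ zero blocks.

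Next, following the proof of \cref{Cond L linear}, I would set $\tilde{P}\coloneqq I-\tilde{L}$, which is $(M+r+1)$-nilpotent, so that $\tilde{L}^{-1}=\sum_{j=0}^{M+r}\tilde{P}^j$. The key step is to show that $\|\tilde{P}^j\|\leq \kappa(V)$ for every $j\geq 1$, uniformly in $j$. Expanding the product, one sees that $\tilde{P}^j$ lives on the $j$-th block sub-diagonal, and its block at position $(n+j,n)$ equals $R^{k(n,j)}$, where $k(n,j)=\bigl|\{n+1,\ldots,n+j\}\cap\{1,\ldots,M\}\bigr|\leq M$, since only transitions within the first $M$ rows contribute factors of $R$. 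Because $R=V\Lambda V^{-1}$ with $\Lambda$ diagonal of unit modulus, $\|R^k\|\leq\kappa(V)$ regardless of $k$. A block sub-diagonal matrix has operator norm equal to the maximum over its nonzero blocks (blocks acting on orthogonal coordinate subspaces), so $\|\tilde{P}^j\|\leq\kappa(V)$.

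Summing gives $\|\tilde{L}^{-1}\|\leq 1+(M+r)\kappa(V)$, and a direct triangle inequality yields $\|\tilde{L}\|\leq 1+\kappa(V)$, whence $\kappa(\tilde{L})=O\bigl((M+r)\kappa(V)^2\bigr)$. The only real obstacle, more a bookkeeping point than a true difficulty, is verifying that the additional $r$ rows, which contribute identity shifts rather than $R$-shifts, do not inflate the uniform bound on $\|\tilde{P}^j\|$; here it is crucial that the eigenvalues of $R$ have unit modulus, so arbitrarily long products $R^k$ remain controlled by $\kappa(V)$ alone. Once this is observed, the rest is a transparent adaptation of the original lemma.
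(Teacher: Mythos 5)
Your proof is correct and follows essentially the same route as the paper: write $\tilde{L}=I-\tilde{P}$ with $\tilde{P}$ being $(M+r+1)$-nilpotent, observe that every nonzero block of $\tilde{P}^j$ is a power $R^k$ whose norm is bounded by $\kappa(V)$ because $R=V\Lambda V^{-1}$ with unit-modulus eigenvalues, and sum the Neumann series to get $\norm{\tilde{L}^{-1}}=O((M+r)\kappa(V))$. Your explicit formula for $\tilde{L}$, which includes the coupling term $\ketbra{M+1}{M}\otimes I$ linking the last genuine step to the padding rows, is in fact slightly more careful than the paper's block-diagonal display, which omits that coupling, but the estimate is unaffected.
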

\begin{proof}
    Notice that
    \[ \tilde{L} = \mat{L & 0 \\ 0 & B}, \]
    where $B$ is a $r\times r$ block matrix with
    \[ B = \mat{I & 0 & 0 & \cdots & 0\\
       -I & I & 0 & \cdots &0 \\
       0 & -I & I &\cdots & 0 \\
       \vdots & \vdots & \ddots & \ddots & \vdots \\
       0 & 0 & \cdots & -I & I 
       }. \]
    We directly get $\|\tilde{L}\|=O(\kappa(V))$. To calculate $\|\tilde{L}^{-1}\|$, we use similar techniques for $\|L^{-1}\|$. Let $\tilde{L}=I-\tilde{P}$, then $\tilde{P}$ is $(M+r+1)$-nilpotent. For any $j=1,\cdots,M+r$, we know $\tilde{P}^j$ only has one nonzero element at each column and is in the form of $R^k$ with $k$ at most $j$. Collecting these steps, one arrives at 
    \begin{equation}
        \|\tilde{L}^{-1}\|=\left\|\frac{1}{I-\tilde{P}}\right\| \le 1+\sum_{i=1}^{M+r}\|R^i\|=O((M+r)\kappa(V)).
    \end{equation}
    This concludes the result.
\end{proof}

An important step in the implementation of the QLSA procedure is the preparation of the matrix $R$. Following the RKG method, each step of the solution method can be rewritten as \cite{deuflhard2012scientific}
\begin{equation}
\left\{
\begin{aligned}
        \bm k_i = & K \bm x_n + \tau \sum_{j=1}^p a_{ij} K \bm k_j, \quad \forall i \in [p] \\
        \bm x_{n+1} = & \bm x_n  + \tau \sum_{i=1}^p b_i \bm k_i.
\end{aligned} \right.
\end{equation}
The parameters $\{b_j, a_{i,j}\}$ are determined from Gaussian quadrature.  Let $\mathcal{A}\in \mathbb{R}^{p\times p}$ and $\bm b\in \mathbb{R}^{p}$ denote those two sets of parameters, respectively. 
We can deduce the matrix $R$ by direct substitution,
\begin{equation}\label{R-mat}
    R(\tau K)= I +  (\bm b^T \otimes I_n)  G^{-1}  (\mathbf{1} \otimes (\tau K)), \quad G:=I_p \times I_n - \mathcal{A} \otimes (\tau K).
\end{equation}
Here $\mathbf{1} = \sum_{i=1}^p \ket{i} $ is a $p$-dimensional vector with all entries being one. 

Now the problem is reduced to the computation of the inverse of the matrix $G$. 

\begin{theorem}\label{thm: Construct R}
  Suppose that $\tau \norm{K} < 1,$ then the matrix $G$ in \cref{R-mat} is invertible.
  Further, if  $\tau \norm{K} <  \frac{1}{2\sqrt{p}},$ then $\kappa(G) < 2 + 2\sqrt{p}$,  
  and thus the matrix $R$ in \cref{R-mat} can be prepared using $\tilde{O}(1)$ queries to $K$ with gate complexity $\tilde{O}(s)$,  where in $\Tilde{O}$ we neglected logarithmic factors
\end{theorem}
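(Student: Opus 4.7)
The plan breaks into three steps: establish invertibility of $G$ via a Neumann series, bound $\kappa(G)$ by matching operator-norm estimates from above and below, and then assemble the block encoding of $R$ using the block-encoding toolkit from the preliminaries.

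For the first two steps, I would write $G=I_{pn}-\mathcal{A}\otimes(\tau K)$ and use the tensor-product identity $\|\mathcal{A}\otimes(\tau K)\|=\|\mathcal{A}\|\cdot\tau\|K\|$. The Butcher coefficient matrix $\mathcal{A}\in\R^{p\times p}$ is fixed by the choice of Gauss-Legendre collocation; its entries $a_{ij}=\int_{0}^{c_i}\ell_j(s)\,ds$ are integrals of Lagrange polynomials on subintervals of $(0,1)$ and satisfy $|a_{ij}|\le 1$, which yields the crude Frobenius bound $\|\mathcal{A}\|\le\sqrt{p}$. Whenever $\|\mathcal{A}\|\tau\|K\|<1$, the Neumann series
\[
G^{-1}=\sum_{k=0}^{\infty}\bigl(\mathcal{A}\otimes\tau K\bigr)^{k}
\]
converges absolutely in operator norm, settling invertibility. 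Under the sharper hypothesis $\tau\|K\|<\frac{1}{2\sqrt{p}}$ one has $\|\mathcal{A}\|\tau\|K\|<\tfrac12$, so the Neumann bound gives $\|G^{-1}\|\le 2$, while the triangle inequality yields $\|G\|\le 1+\|\mathcal{A}\|\tau\|K\|\le 1+\sqrt{p}$. Multiplying gives $\kappa(G)\le 2(1+\sqrt{p})=2+2\sqrt{p}$, exactly the stated bound.

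For the block encoding of $R$, the sparse-access oracles for $K$ together with \cref{lemma:sparse-to-be} give an $(s,\cdot,\varepsilon)$ block encoding of $K$ (and hence of $\tau K$ after rescaling) with $\tilde{O}(s)$ gate cost per query. The constant-size objects $\mathcal{A}\in\R^{p\times p}$, $\bm b\in\R^{p}$, and $\mathbf{1}\in\R^{p}$ can be block-encoded directly from classical data. Two applications of \cref{lem:BE-lcu} then assemble a block encoding of $G=I-\mathcal{A}\otimes(\tau K)$; \cref{lem:BE-inv} inverts it at $\tilde{O}(1)$ queries to $K$, using the bound $\kappa(G)=O(1)$ from the previous step. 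A final LCU step forms $R=I+(\bm b^T\otimes I)\,G^{-1}\,(\mathbf{1}\otimes\tau K)$ at only constant asymptotic overhead, giving the stated $\tilde{O}(1)$ query and $\tilde{O}(s)$ gate complexity. The main obstacle is the first step: getting a handle on $\|\mathcal{A}\|$ that is sharp enough to make the loose hypothesis $\tau\|K\|<1$ suffice for invertibility, since the crude $\sqrt{p}$ Frobenius estimate does not; either a tighter eigenvalue analysis of the Gauss-Legendre Butcher matrix or an appeal to its algebraic stability is needed to close this gap.
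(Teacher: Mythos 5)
Your condition-number bound and your block-encoding assembly essentially coincide with the paper's proof: the paper also bounds $\|G\|\le 1+\|\mathcal{A}\|\le 1+\sqrt{p}$, gets $\|G^{-1}\|\le 2$ from the Neumann series once $\|\mathcal{A}\otimes(\tau K)\|<1/2$, and then builds $R$ by block-encoding the sparse matrix $G$ (sparsity $ps+1=\tilde{O}(s)$ since $p$ is logarithmic), inverting it via \cref{lem:BE-inv} with $\kappa(G)=O(\sqrt{p})=\tilde{O}(1)$, and composing with \cref{lem:BE-lcu}. However, two points need repair. First, your justification of $\|\mathcal{A}\|\le\sqrt{p}$ is not correct as stated: from $|a_{ij}|\le 1$ the Frobenius norm only gives $\|\mathcal{A}\|\le\|\mathcal{A}\|_F\le p$, not $\sqrt{p}$. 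The paper instead uses $\|\mathcal{A}\|\le\sqrt{p}\,\|\mathcal{A}\|_\infty$ together with the known bound $\|\mathcal{A}\|_\infty\le\sqrt{\xi_p}<1$ for the Gauss Butcher matrix (where $\xi_p$ is the largest root of the degree-$p$ Legendre polynomial), which delivers the $\sqrt{p}$ you need for the $\kappa(G)<2+2\sqrt{p}$ conclusion.

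Second, the obstacle you flag at the end is a genuine gap in your argument, and it cannot be closed by any operator-norm/Neumann estimate: indeed $\|\mathcal{A}\|$ is not smaller than $1$ in general, so $\tau\|K\|<1$ does not make $\|\mathcal{A}\otimes(\tau K)\|<1$. The paper closes it exactly along the "tighter eigenvalue analysis" you suggest: by a classical result on Gauss collocation methods, $\mathcal{A}$ is similar to a tridiagonal matrix that is skew-symmetric except for a single $\tfrac12$ in the corner, whence the spectral radius satisfies $\varrho(\mathcal{A})<1$. Since the eigenvalues of $\mathcal{A}\otimes(\tau K)$ are products of eigenvalues of the factors, $\varrho\bigl(\mathcal{A}\otimes(\tau K)\bigr)\le\varrho(\mathcal{A})\,\tau\|K\|<1$, so $1$ is not an eigenvalue of $\mathcal{A}\otimes(\tau K)$ and $G=I-\mathcal{A}\otimes(\tau K)$ is invertible under the weak hypothesis $\tau\|K\|<1$ alone. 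With that spectral input added (and the $\|\mathcal{A}\|_\infty$ bound replacing the Frobenius claim), your proof matches the paper's.
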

\begin{proof}

Implicit Runge-Kutta methods that are derived from collocation methods are extensively treated in \cite{wanner1996solving}. In particular, for the RKG method, the parameter matrix  $\mathcal{A}$ is similar to a tri-diagonal matrix \cite[Theorem 5.6]{wanner1996solving},
\[
\begin{bmatrix}
    \frac12 & -\frac{1}{2\sqrt{3}} &  0 & 0 & \cdots &  0 &  0\\
     \frac{1}{2\sqrt{3}} & 0 & -\frac{1}{2\sqrt{15}}   & 0 & \cdots  &  0& 0 \\
   \vdots &  \vdots & \ddots & \ddots  & \ddots & \vdots &\vdots  \\
    0 & 0  &0  & 0& \cdots & 0  & - \frac{1}{2\sqrt{4(p-1)^2-1}}  \\
   0 & 0  &0  & 0& \cdots &\frac{1}{2\sqrt{4(p-1)^2-1}}  & 0
\end{bmatrix}
\]
Therefore, the spectral radius $\varrho(\mathcal{A})< \frac{1}{2}(1+ \frac{1}{\sqrt{3}})<1$, 
and in light of \cref{R-mat}, $G$ is invertible.

We proceed to estimate the condition number of $G$, 
\[
    \norm{G} \leq 1 + \norm{\mathcal{A}} \leq 1 +  \sqrt{p}\norm{\mathcal{A}}_{\infty} \leq 
    1 + \sqrt{p}. 
    \]
Here we first use the property of the matrix norms, followed by the result  \cite{caklovic2022infinity} $\norm{\mathcal{A}}_{\infty} \leq \sqrt{\xi_p}<1$, where $\xi_p$ is the largest root of the Legendre polynomial of degree $p$.

Meanwhile, due to the fact that $\varrho(\mathcal{A})<1$ and $\tau \norm{K} < 1,$ we can use the Neumann series to express the inverse of $G$ 
\[
G^{-1} = \sum_{m\geq 0} \mathcal{A}^m \otimes (\tau K)^m \Rightarrow \norm{G^{-1}} \leq 
\frac{1}{1- \norm{\mathcal{A}}_\infty/2 } \leq 2.
\]
The last step used the fact that $\norm{\mathcal{A}\otimes (\tau K)} \leq \norm{\mathcal{A}}\frac{1}{2\sqrt{p}}<1/2.$

With these properties established for the matrix $G$, we can construct a block encoding of $G^{-1}$. Specifically, recall that $K$ is $s-$sparse from \cref{sparsity K}, implying that $G$ is sparse with sparsity at most $(ps+1)-$ from its definition in \cref{R-mat}. Note that from \cref{parameter p}, $p$ is introduces a logarithmic factor, implying that the sparsity is still $\tilde{O}(s)$.    Next, following the procedure in \cite[Lemma 2 and Remark 1]{krovi2023improved}, we first define $\overline{G}$ as the Hermitian complement of $G$, i.e., according to \cref{Hermitian Complement}, which has the same condition number as $G$. Therefore,  \cref{lem:BE-inv} guarantees that  
we can construct a $(2\kappa(G), a+3+O(\log(\kappa(G)^2\log\epsilon^{-1})),\epsilon)$ block encoding of $\overline{G}^{-1}$ with gate complexity
\begin{equation}
  O\left(ps\kappa(G)(a+3+O(a+\log^{2.5}(4s^2/\delta)))\log^2(\kappa(G)^2/\epsilon)\right) = \tilde{O}(s),
\end{equation}
where $a$ is qubit-size of matrix $\overline{G}$. As suggested in \cite[Remark 1]{krovi2023improved}, a block encoding of $G^{-1}$ can be extracted afterwards. Therefore, we can obtain from \cite[Theorem 4.1]{camps2024explicit} and the estimate on $p$ in \cref{parameter p} the block encoding of $R$ with $\tilde{O}(1)$ queries to $K,$ and gate complexity $\tilde{O}(s)$. Here we assume that the access to the parameters in the RKG methods are given. 
    
\end{proof}

Now we arrive at the first main theorem in this paper. 

\begin{theorem}
    The quantum algorithm using QLSA produces a quantum state $\ket{\psi} = \frac{\bm x_M}{\norm{\bm x_M}}   $, satisfying $\norm{\bm x_M - \bm x(t)}< \epsilon$. The mapping from $\bm x_0$ to $\bm x_M$ is symplectic. The algorithm involves a query complexity  $ \Tilde{O}\bigl(T \|K\| \kappa(V)^2\bigr)$ of $K$. 
\end{theorem}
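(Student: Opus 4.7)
The plan is to assemble three ingredients already set up in the paper: the symplectic and accuracy properties of the RKG step $R(\tau K)$, the condition-number bound on the concatenated linear operator $\tilde{L}$, and the block encoding of $R$ from \cref{thm: Construct R}. I would dispatch the symplectic claim first, as it requires no analysis: the discrete trajectory produced by the algorithm is $\bm x_M = R(\tau K)^M \bm x_0$, and since $R(\tau K)$ is symplectic at every step by Property~5 of the RKG theorem, the composed map from $\bm x_0$ to $\bm x_M$ is symplectic because the symplectic group is closed under composition.

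For the error bound, I would split the total error into a classical discretization piece and a quantum solver piece. Choosing $\tau \|K\| = O(1)$ together with the stage count $p$ from \cref{parameter p} makes the RKG global error at most $\epsilon/2$ by the global error theorem. For the quantum piece, I would use the history-state formulation $\tilde{L}\ket{\bm x} = \ket{\bm b}$ with $r = O(M)$ padding rows from \cref{History boosted} so that the last time block carries a $\Omega(1)$ amplitude and admits extraction of $\ket{\psi}$ by a constant number of amplitude-amplification rounds. Applying the QLSA of \cref{lem:qls} with tolerance $\epsilon/2$, followed by post-selection onto the $\ket{t_M}$ register, then delivers the claimed $\ket{\psi}$ within total error $\epsilon$.

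For the complexity, I would build the block encoding of $\tilde{L}$ from its decomposition in \cref{A-lhs}: the diagonal identity part is trivial, while the sub-diagonal part is $\sum_m \ketbra{m}{m-1} \otimes R$, a tensor product needing a single block encoding of $R$. By the linear-combination-of-unitaries rules in \cref{lem:BE-lcu}, the block encoding of $\tilde{L}$ then requires $O(1)$ uses of the $R$ block encoding, which in turn costs $\tilde{O}(1)$ queries to $K$ via \cref{thm: Construct R}. Combining $\kappa(\tilde{L}) = O(M \kappa(V)^2)$ with $M = T/\tau = O(T\|K\|)$ from the step-size choice, the QLSA query count to the block encoding of $\tilde{L}$ is $\tilde{O}(T\|K\|\kappa(V)^2)$, and multiplying through with the $\tilde{O}(1)$-per-call cost yields the stated query complexity to $K$.

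The main obstacle I anticipate is bookkeeping the error budget through the nested block encodings rather than any deep new idea: the tolerance $\delta$ used in \cref{lem:BE-inv} when forming $G^{-1}$ (and hence $R$) must be tight enough that errors do not compound unacceptably over the $M$ time steps and the QLSA inversion of $\tilde{L}$, yet loose enough that the $\log(1/\delta)$ and $\log(\kappa)$ factors stay inside the $\tilde{O}(\cdot)$. A secondary subtlety is verifying that the hypothesis $\tau\|K\| < 1/(2\sqrt{p})$ of \cref{thm: Construct R} is compatible with $\tau\|K\| = O(1)$ under the slowly growing $p$ of \cref{parameter p}, which only degrades constants by a polylogarithmic factor and is therefore absorbed by $\tilde{O}$.
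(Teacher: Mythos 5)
Your proposal is correct and follows essentially the same route as the paper's proof: symplecticity from Property~5 of the RKG map under composition, block encoding of $L$ via \cref{lem:BE-lcu} and \cref{thm: Construct R}, the condition-number bound $\kappa(L)=O(M\kappa(V)^2)$, and \cref{lem:qls} with $M=T/\tau=O(T\norm{K})$ to obtain the stated query count. The additional points you flag (the $r=O(M)$ padding from \cref{History boosted}, the error budget across nested block encodings, and the compatibility of $\tau\norm{K}<1/(2\sqrt{p})$ with $\tau\norm{K}=O(1)$ up to polylogarithmic factors) are details the paper leaves implicit but handles the same way.
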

\begin{proof}
The symplectic property follows directly from that of the RKG method. With the structure of $L$ in \cref{A-lhs} and \cref{thm: Construct R}, we can block encode $L$ with gate complexity $\tilde{O}(s)$ from \cref{lem:BE-lcu}. Recall that $\kappa(L)$ is bounded by $O(M\kappa(V)^2)$ in \cref{Cond L linear}. By \cref{complexity lemma}, we know the query complexity is $O(\kappa(L)\log\frac{1}{\varepsilon})(T_0+T_{in})$, where $T_0$ is the complexity of block encoding $L$ and $T_{in}$ is the complexity of preparing the initial state. Note that $T_0=\tilde{O}(1)$, and $T_{in}=O(1)$ according to \cite[Lemma 14]{krovi2023improved}. Therefore, the total query complexity in constructing $L$ requires $\tilde{O}(\kappa(L))=\tilde{O}(M\kappa(V)^2)$ queries to $K$. Recall that $\tau M = T$ and $\tau\norm{K}=O(1)$. Thus, $\tilde{O}(M\kappa(V)^2)=\tilde{O}(T\norm{K}\kappa(V)^2)$. 
\end{proof}

\section{Nonlinear Hamiltonian systems}

We now generalize our algorithm to nonlinear Hamiltonian systems. In particular, we consider a Hamiltonian  that can be expanded as a polynomial,
\begin{equation}
    H(\bm x)= H_2 \bm x^{\otimes 2} + H_3 \bm x^{\otimes 3}+ H_4 \bm x^{\otimes 4} + \cdots 
\end{equation}
The type of Hamiltonians that can be expanded in this form has been extensively studied in the theory of dynamical systems, especially for their generic solution structure \cite{cabral2023normal}. 

To keep our discussions simple, we focus on the case where the Hamiltonian only consists of the first two terms, and the general case can be handled similarly. We also write $Q=\frac12 H_2$ and $C=\frac13 H_3.$ Without loss of generality, we can impose the following symmetry: $Q_{i,j}$ and $C_{i,j,k}$ are invariant under permutations of the indices. Therefore, the nonlinear dynamical system can be written as,
\begin{equation}\label{nonlinearHam}
    \frac{d}{dt} \bm x= J^{-1} Q \bm x +J^{-1} C \bm x \otimes \bm x.
\end{equation}

\begin{example}
  Consider the FPU chain model for a system of  $L$ particles with quadratic and cubic terms in the potential energy,
    \begin{equation}
        U(z)=\frac{k}{2}z^2+\frac{\alpha}{3}z^3. 
    \end{equation}
    Then the Hamiltonian $H(\bm p, \bm q)$  is given by, 
    \begin{equation}
        H =  \sum_{i=1}^L \frac{p_i^2}{2m} + \sum_{i=1}^{L-1}\left(\frac{k}{2}(q_{i+1}-q_i)^2+\frac{\alpha}{3}(q_{i+1}-q_i)^3 \right).
    \end{equation}
\end{example}

\subsection{Quantum algorithms via Carleman embedding}

Carleman embedding is a common technique for transforming a nonlinear ODE system into an infinite-dimensional linear ODE system. We refer readers to the related works \cite{amini2022carleman, brustle2024quantum,liu2021efficient, wu2024quantum}. We express the ODEs in \cref{nonlinearHam} in a form considered in \cite{liu2021efficient}
\begin{equation}
    \frac{d}{dt}\bm x = F_1\bm x+F_2\bm x^{\otimes2},
\end{equation}
where for Hamiltonian system in \cref{nonlinearHam}, we choose $F_1 = -JQ$ and $F_2=-JC$. In Carleman embedding, one defines a new set of variables, 
$\bm y = \mat{\bm x & \bm x^{\otimes2} &\bm x^{\otimes3} & \cdots}^T$, which from \cref{nonlinearHam}, follow an \emph{infinite} linear ODE system from Carleman,
\begin{equation}\label{eq: car}
    \frac{d}{dt}\bm y = A \bm y,
\end{equation}
where the structure of the matrix $A$ is same as  those  in \cite{liu2021efficient,wu2024quantum}.

Here we introduce the same assumption for the no-resonance condition in our previous work \cite{wu2024quantum}.

\begin{assumption}\label{assump1}
  Let  $\{\l_1,\cdots,\l_n\}$  be the eigenvalues of $F_1$, i.e.,
  \begin{equation}\label{F1}
      F_1 = V \Lambda V^{-1}.
  \end{equation}
  We assume that, for any $i\in[n]$,
    \begin{equation}\label{cond-res}
        \l_i \ne \sum_{j=1}^n m_j\l_j \quad \forall m_j\in \mathbb{Z} \; \text{and} \; m_j \ge0 \text{ s.t. } \sum_{j=1}^n m_j\ge2. 
    \end{equation}
Equivalently, there exists a constant $\Delta >0,$ such that
\begin{equation}\label{Delta}
    \norm{ \left((\l_{m_1}+\l_{m_2}+\cdots+\l_{m_n})I-F_1\right)^{-1} } \leq \frac{1}{\Delta}.
\end{equation}
\end{assumption}

Notice that the diagonalizability is consistent with our assumption in \cref{as-stab} for the linear Hamiltonian systems. Meanwhile, the no-resonance condition ensures the nonlinear stability \cite{wu2024quantum}.

\paragraph{Properties of the infinite Carleman system. }
Suppose that the solution of the nonlinear dynamics has a uniform in time bound,
\( \norm{\bm x(t)}< \mu \), in the time interval of interest,  we can define a Banach space similar to that in \cite{amini2021error},
\begin{equation}
    {\ell}^2:= \{ \bm y: \sum_{j\geq 1} \mu^{-2j} \norm{\bm y_j}^2 < \infty \}.
\end{equation}

\begin{theorem}\label{inf}
  Assume that the solution of the nonlinear ODE satisfies a uniform bound,
    \begin{equation}
        \norm{\bm x(t)}_1 \leq \mu, \; \forall t\in [0,T],
    \end{equation}
 $F_1$ fulfills \cref{assump1}, and that 
 \begin{equation}\label{Rr}
   \mathsf{R}_r <1, \quad   \mathsf{R}_r:= \frac{ 4e\mu \kappa_1(V)\norm{F_2}_1}{\Delta},
 \end{equation}
where $\kappa_1(V)=\norm{V}_1\norm{V^{-1}}_1$.
Then the infinite Carleman system has a unique solution $\bm y(t) \in {\ell}^2$, and $\bm y_j = \bm x(t)^{\otimes j}, \forall j\in \mathbb{N}.$
\end{theorem}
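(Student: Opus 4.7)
The plan is to establish existence by exhibiting an explicit solution and then prove uniqueness by a contraction argument in the weighted sequence space $\ell^2$.

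For existence, the natural candidate is $\bm y_j(t) := \bm x(t)^{\otimes j}$, where $\bm x(t)$ is the classical solution of the nonlinear system \cref{nonlinearHam}. Differentiating via the Leibniz rule for tensor products gives
\[
\frac{d}{dt}\bm x^{\otimes j} = \sum_{k=1}^j \bm x^{\otimes(k-1)}\otimes (F_1\bm x + F_2\bm x^{\otimes 2})\otimes \bm x^{\otimes(j-k)},
\]
which collects into a piece linear in $\bm y_j$ (through the symmetric lift of $F_1$) and a piece linear in $\bm y_{j+1}$ (through the symmetric lift of $F_2$). These are exactly the blocks $A_{j,j}$ and $A_{j,j+1}$ of the generator of \cref{eq: car}, so the ansatz is a bona fide solution. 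Membership in $\ell^2$ then follows from $\norm{\bm y_j(t)} \le \norm{\bm x(t)}_1^j \le \mu^j$, together with geometric summability once the bound $\norm{\bm x(t)}_1 < \mu$ is strict, which we may arrange by slightly enlarging $\mu$.

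For uniqueness, the plan is to take $\bm w = \bm y - \bm z$, the difference of any two putative $\ell^2$ solutions; by linearity, $\bm w$ solves the same homogeneous Carleman equation with $\bm w(0)=0$. Applying variation of constants level-by-level yields
\[
\bm w_j(t) = \int_0^t e^{A_{j,j}(t-s)}A_{j,j+1}\bm w_{j+1}(s)\,ds.
\]
After conjugating by $V^{\otimes j}$ so that $A_{j,j}$ becomes diagonal with eigenvalues $\sum_k \lambda_{i_k}$, the no-resonance condition \cref{Delta} delivers the uniform control $\norm{e^{A_{j,j}t}} \le \kappa_1(V)^j$, while $\norm{A_{j,j+1}}_1 \lesssim j\norm{F_2}_1$ from the symmetric lift. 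Iterating the Duhamel bound down the Carleman hierarchy produces a series whose successive ratio is governed by $\mathsf{R}_r$, and the assumption $\mathsf{R}_r<1$ forces $\bm w\equiv 0$.

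The chief obstacle will be balancing the level-dependent prefactors against the weighted $\ell^2$ norm. The combinatorial factor $j$ from the symmetric tensor lift of $F_2$ and the factor $\kappa_1(V)^j$ from the eigenbasis change combine to yield a prefactor that would be unsummable without the weight $\mu^{-2j}$. The constant $\mathsf{R}_r = 4e\mu\kappa_1(V)\norm{F_2}_1/\Delta$ from \cref{Rr} is precisely the calibrated bundling of these quantities, with the numerical factor $4e$ absorbing Stirling-type combinatorial constants from summation over tensor multi-indices. Once this balance is struck, the remaining steps are routine Picard-type fixed-point analysis; the subtlety lies entirely in choosing the norm that closes the contraction.
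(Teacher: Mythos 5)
Your existence half is fine, and it is actually more self-contained than the paper's route: verifying by the Leibniz rule that $\bm y_j=\bm x(t)^{\otimes j}$ solves \cref{eq: car} with the blocks \cref{Ajj} is correct, whereas the paper obtains the same identification as the limit of finite truncations via \cref{car-convg}, i.e.\ \cite[Theorem~2.12]{wu2024quantum}. (One small caveat: with the non-strict bound $\norm{\bm x(t)}_1\le\mu$ you only get $\mu^{-2j}\norm{\bm y_j}^2\le 1$, so the weighted sum is borderline, and you cannot "slightly enlarge $\mu$" for free because $\mu$ enters $\mathsf{R}_r$ in \cref{Rr}; you need either a weight strictly larger than $\mu$ or some slack in the a priori bound.)

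The genuine gap is in the uniqueness argument. Iterating your level-by-level Duhamel bound $k$ times gives
\[
\norm{\bm w_j(t)}\;\le\;\frac{t^{k}}{k!}\,\Bigl(\prod_{i=0}^{k-1}(j+i)\,\norm{F_2}\,\kappa_1(V)^{\,j+i}\Bigr)\,\sup_{s\le t}\norm{\bm w_{j+k}(s)},
\]
and the accumulated factor $\prod_{i=0}^{k-1}\kappa_1(V)^{j+i}=\kappa_1(V)^{jk+k(k-1)/2}$ grows like $\kappa_1(V)^{k^2/2}$, which defeats any geometric ratio $\mathsf{R}_r^{k}$ whenever $\kappa_1(V)>1$; the combinatorial factors and the $\ell^2$ tail bound $\norm{\bm w_{j+k}}=O(\mu^{j+k})$ only contribute exponentially in $k$ and cannot compensate. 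Moreover, each level of your iteration contributes a factor of $t$, whereas $\mathsf{R}_r$ contains $1/\Delta$ and no $T$ at all; the resolvent bound \cref{Delta} cannot enter through the crude estimate $\norm{e^{A_{j,j}t}}\le\kappa_1(V)^{j}$, which follows from diagonalizability alone and has nothing to do with no-resonance. To actually get $\mathsf{R}_r$ as the per-level ratio one must use that the forcing $A_{j,j+1}\bm w_{j+1}$ oscillates at frequencies $\lambda_{m_1}+\cdots+\lambda_{m_{j+1}}$ that \cref{assump1} separates from the spectrum of $A_{j,j}$ by $\Delta$, so the Duhamel integral is bounded by $1/\Delta$ rather than by $t$ and the $\kappa_1(V)$ powers do not compound across levels. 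That oscillatory-integral argument is exactly the content of \cite[Theorem~2.12]{wu2024quantum}, to which the paper's one-line proof delegates; your assertion that $\mathsf{R}_r$ "is precisely the calibrated bundling" of your prefactors is the step that is missing, and with the bounds you state the contraction does not close.
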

\begin{proof}
This can be proved by taking  a finite truncation of the Carleman system  \eqref{eq: car}, then taking the limit as the truncation level goes to infinite. See \cref{car-convg} and \cite[Theorem 2.12]{wu2024quantum}.
\end{proof}

To discuss the symplectic property, we must examine the fundamental matrix $W,$ $W:=\frac{\partial \bm y(t) }{\partial \bm y(0)}$. We first show that the first diagonal block is symplectic. 
\begin{lemma}
    Under the same condition as in \cref{inf}, the matrix $W_{11}:= \frac{\partial \bm y_1(t) }{\partial \bm y_1(0)} $ associated with the solution of the infinite Carleman system \eqref{eq: car} is symplectic.
\end{lemma}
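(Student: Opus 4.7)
The plan is to leverage the identification $\bm y_1(t) = \bm x(t)$ from \cref{inf} to reduce the claim to the classical fact that the flow of a Hamiltonian system is symplectic. Specifically, under the conditions of \cref{inf}, the unique $\ell^2$ solution of the infinite Carleman system has components $\bm y_j(t) = \bm x(t)^{\otimes j}$, where $\bm x(t)$ is the solution of the original nonlinear Hamiltonian ODE \eqref{nonlinearHam} with initial data $\bm y_1(0) = \bm x(0)$. Consequently,
\[
W_{11} = \frac{\partial \bm y_1(t)}{\partial \bm y_1(0)} = \frac{\partial \bm x(t)}{\partial \bm x(0)} =: \Phi(t),
\]
so it suffices to show that the fundamental matrix $\Phi(t)$ of \eqref{nonlinearHam} is symplectic.

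Next, I would derive the variational equation satisfied by $\Phi(t)$. Writing \eqref{nonlinearHam} as $\dot{\bm x} = -J \nabla H(\bm x)$ and differentiating with respect to $\bm x(0)$ yields
\[
\frac{d}{dt}\Phi(t) = -J\, S(\bm x(t))\, \Phi(t), \qquad \Phi(0) = I,
\]
where $S(\bm x) = \nabla^2 H(\bm x)$ is symmetric. Using $J^T = -J$ and the symmetry of $S$, a direct computation gives
\[
\frac{d}{dt}\bigl(\Phi^T J \Phi\bigr) = \Phi^T\bigl(S J^T J + J(-J)S\bigr)\Phi = \Phi^T(S - S)\Phi = 0,
\]
so $\Phi(t)^T J \Phi(t) \equiv \Phi(0)^T J \Phi(0) = J$, which is the desired symplectic identity.

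The main obstacle is justifying that the identification $\bm y_1(t)=\bm x(t)$ can indeed be differentiated in the initial datum $\bm x(0)$, i.e., that $\Phi(t)$ is well defined. This follows because \cref{assump1} together with the smallness condition $\mathsf{R}_r < 1$ in \cref{inf} guarantees existence, uniqueness and smooth dependence on initial data of the Carleman solution in $\ell^2$, which transfers to smooth dependence of $\bm x(t) = \bm y_1(t)$ on $\bm x(0)$. With this regularity in hand, the variational argument above closes the proof, and one concludes that $W_{11}$ is symplectic for every $t$ in the interval of existence.
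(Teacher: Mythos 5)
Your proof is correct, and it takes a cleaner route than the paper's. You identify $\bm y_1(t)=\bm x(t)$ via \cref{inf}, so that $W_{11}$ is the fundamental matrix $\Phi(t)=\partial\bm x(t)/\partial\bm x(0)$ of the original nonlinear system, and then invoke the classical Poincar\'e argument: the variational equation $\dot\Phi=-J\,\nabla^2H(\bm x(t))\,\Phi$ with symmetric Hessian forces $\frac{d}{dt}\bigl(\Phi^TJ\Phi\bigr)=0$. The paper instead carries out the same cancellation entirely inside the Carleman block structure: it writes $\dot W_{11}=A_{11}W_{11}+A_{12}W_{21}$, cancels the $A_{11}=J^{-1}Q$ contribution using the symmetry of $Q$, and then handles the off-diagonal term by substituting $W_{21}=W_{11}\otimes\bm x+\bm x\otimes W_{11}$ and using the permutation invariance of $C$ to show $(CW_{21})^TW_{11}$ is symmetric. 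The two arguments are the same computation in different coordinates --- the paper's $A_{12}W_{21}$ term is exactly the Hessian of the cubic part of $H$ acting on $W_{11}$ --- but yours buys brevity and generality (it works verbatim for any smooth Hamiltonian, not just quadratic-plus-cubic), while the paper's makes explicit which structural features of the Carleman matrix are responsible, which is useful when one later truncates and asks how symplecticity degrades. Two small points: your displayed intermediate expression $\Phi^T\bigl(SJ^TJ+J(-J)S\bigr)\Phi$ has a dropped sign (it should read $\Phi^T\bigl(-SJ^TJ+J(-J)S\bigr)\Phi=\Phi^T(-S+S)\Phi$, or equivalently $\Phi^T\bigl(SJ^2-J^2S\bigr)\Phi$), though the conclusion is unaffected; and the appeal to smooth dependence in $\ell^2$ is unnecessary --- smooth dependence of $\bm x(t)$ on $\bm x(0)$ is standard ODE theory for the polynomial vector field in \eqref{nonlinearHam} on the interval where the uniform bound $\norm{\bm x(t)}_1\le\mu$ holds.
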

\begin{proof}
    To fulfill this property, it requires $ \frac{d}{dt}  W_{11}^T J W_{11}=0, $ i.e.,
\[
W_{11}^T A_{11}^T J W_{11} + W_{11}^T J A_{11} W_{11}+
 W_{21}^T A_{12}^T J W_{11}  + W_{11}^T J A_{12} W_{21}=0.
 \]
In particular, $A_{11} = J^{-1} Q $ with $Q$ being a symmetric matrix. Therefore, the first two terms are canceled, and we are led to the condition,
\[
 W_{21}^T A_{12}^T J W_{11}  + W_{11}^T J A_{12} W_{21}=0.
\]
Note that \[ W_{21}(t) = \frac{\partial \bm y_2(t)}{\partial \bm y_1(0)}= W_{11}\otimes \bm x + \bm x \otimes W_{11}. \]
and 
\begin{equation}
    W_{21}^T A_{12}^TJW_{11} = (A_{12}W_{21})^TJW_{11}=\left(J^{-1}CW_{21}\right)^TJW_{11}=W_{21}^TC^T(J^{-1})^TJW_{11}=-W_{21}^TC^TW_{11}.
\end{equation}
Therefore, we see the equivalence, 
\[ W_{21}^T A_{12}^T J W_{11}  + W_{11}^T J A_{12} W_{21}=0\iff \left(CW_{21}\right)^TW_{11}=W_{11}^T\left(CW_{21}\right). \]
It remains to prove that $\left(CW_{21}\right)^TW_{11}$ is symmetric: 
\begin{equation}
    \begin{aligned}
        \left(CW_{21}\right)^TW_{11} &= \left(W_{11}\otimes \bm x+\bm x\otimes W_{11}\right)^TC^TW_{11}\\
        &= \left(W_{11}^T\otimes\bm x^T+\bm x^T\otimes W_{11}\right)C^TW_{11}\\
        &= W_{11}^T C \left(W_{11}^T\otimes\bm x^T+\bm x^T\otimes W_{11}\right)^T\\
        &= W_{11}^T C W_{21}.
    \end{aligned}
\end{equation}
The equality follows from the invariant of $C$ under permutation. We conclude that $W_{11}^T J W_{11}$ equals to some constant matrix. Since $W_{11}(0)=I$, we have
\[ W_{11}^T J W_{11}=J,  \]
implying $W_{11}$ is sympletic.
\end{proof}

Another interesting perspective of the Carleman embedding is that it induces a nonlinear coordinate transformation that maps the nonlinear ODEs into another simpler ODE. Under the no-resonance condition, the new ODE system becomes linear \cite{tsiligiannis1989normal}, which in the theory of dynamical systems \cite{katok1995introduction}, is known as the normal form.  An elegant result from Hamiltonian systems is that they can be mapped to another Hamiltonian system, and the mapping is also a symplectic transformation \cite{cabral2023normal}. For Carleman embedding, we can state this connection as follows,

\begin{theorem}
    Assume the convergence of the Carleman embedding and the absence of resonance. Then there exists an invertible coordinate transformation $\bm z_1 = \bm h(\bm x)$, 
such that
\begin{equation}\label{eq: z_1}
    \frac{d}{dt} \bm z_1 = \Lambda \bm z_1, \quad  \Lambda:= \mathrm{diag} \{ \lambda_1, \lambda_2, \cdots, \lambda_n \}.
\end{equation} 

\end{theorem}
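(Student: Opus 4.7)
The plan is to leverage the Carleman embedding \eqref{eq: car} directly: because its matrix $A$ is block upper-triangular, a block-diagonalizing similarity transformation $T$ decouples the blocks, and the top block is precisely the diagonal $\Lambda$. Projecting onto the first coordinate block will then define the desired map $\bm h$. First, I would diagonalize the linear part via $F_1 = V\Lambda V^{-1}$ from \eqref{F1} and pass to the eigenbasis, so without loss of generality $F_1 = \Lambda$. In these coordinates $A$ is block upper-triangular (in fact block-bidiagonal, since $F_2 \bm x^{\otimes 2}$ contributes only a $j \to j+1$ coupling), with diagonal blocks $A_{jj}$ acting on $\bm x^{\otimes j}$ and having spectrum $\{\lambda_{i_1}+\cdots+\lambda_{i_j}\}$.

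Next, I would construct $T$ as a block upper-triangular matrix with identity on the diagonal by solving for its top row of blocks inductively in $k \geq 2$ via the Sylvester-type equation obtained by matching the first block row of $AT = T D$, where $D$ is the block-diagonal part of $A$:
\begin{equation*}
\Lambda\, T_{1,k} - T_{1,k} A_{kk} = R_k(T_{1,2},\ldots,T_{1,k-1}).
\end{equation*}
The right-hand side depends only on lower-order unknowns and on $F_2$. The eigenvalues of the operator $X \mapsto \Lambda X - X A_{kk}$ are $\lambda_i - (\lambda_{j_1}+\cdots+\lambda_{j_k})$ with $k \geq 2$, which by the no-resonance condition \eqref{cond-res} are all nonzero, and by \eqref{Delta} are bounded below in modulus by $\Delta$. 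Hence each $T_{1,k}$ is uniquely determined, producing a formal series
\begin{equation*}
\bm h(\bm x) = V^{-1}\bm x + \sum_{k \geq 2} T_{1,k}\,(V^{-1}\bm x)^{\otimes k}.
\end{equation*}
Setting $\bm z := T^{-1}\bm y$ gives $\dot{\bm z} = D \bm z$, whose top block is exactly $\dot{\bm z}_1 = \Lambda \bm z_1$. Since $D\bm h(0) = V^{-1}$ is invertible, the inverse function theorem provides local invertibility of $\bm h$.

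The main obstacle is establishing convergence of the series for $\bm h$ and its inverse on a domain containing the trajectory $\bm x(t)$ for $t \in [0,T]$; this is the classical Poincar\'e normalization problem in the non-resonant Poincar\'e domain. I would argue inductively that $\|T_{1,k}\|$ admits a geometric bound of the form $(C \|F_2\| / \Delta)^{k-1}$, so that the smallness condition $\mathsf{R}_r < 1$ from \eqref{Rr} together with the uniform a-priori bound $\|\bm x(t)\|_1 \leq \mu$ ensures absolute convergence on the relevant ball. The convergence of the Carleman embedding established in \cref{inf} provides the analytic framework in which these estimates close, and invertibility of $\bm h$ along the trajectory follows either from a parallel argument for the inverse series or by quantifying the radius of the inverse function theorem in terms of $1/\mathsf{R}_r$.
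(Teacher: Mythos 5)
Your proposal follows essentially the same route as the paper: both arguments reduce the block upper-bidiagonal Carleman matrix $A$ to block-diagonal form by an upper-triangular similarity and read off $\bm z_1$ from the first block. The paper simply asserts that $A = WDW^{-1}$ with $W$ and $W^{-1}$ upper triangular and sets $\bm h(\bm x) = E_1 W^{-1}\phi(\bm x)$; you make the mechanism behind that assertion explicit via the Sylvester (homological) equations, whose solvability is exactly what the no-resonance condition \eqref{cond-res} and the bound \eqref{Delta} guarantee — so on the key step your version is, if anything, more complete than the paper's. One correction to the recursion as you wrote it: the first block row of $AT = TD$ reads $\Lambda T_{1,k} + A_{12}T_{2,k} = T_{1,k}A_{kk}$, so it couples to the \emph{second} block row of $T$ rather than to $T_{1,2},\dots,T_{1,k-1}$; the one-term recursion you want is obtained by constructing the first block row of $T^{-1}$ directly, i.e.\ seeking $[S_1, S_2, \dots]$ with $S_1 = I$ and $[S]\,A = \Lambda\,[S]$, which gives $\Lambda S_k - S_k A_{kk} = S_{k-1}A_{k-1,k}$ and then $\bm h(\bm x) = \sum_k S_k \bm x^{\otimes k}$, matching the paper's $\sum_j Z_{1j}\bm x^{\otimes j}$ with $Z = W^{-1}$. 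Two further remarks: the convergence you flag as the main obstacle is assumed in the hypothesis of the theorem (``assume the convergence of the Carleman embedding''), so your Poincar\'e-domain estimates, while a nice bonus, are not required; and the paper constructs $\bm h^{-1}$ explicitly as a power series in $\bm z_1$ (this explicit form is what is used later in the factorization $S = \bm h^{-1}\circ S_1 \circ \bm h$), whereas your appeal to the inverse function theorem gives only local invertibility — sufficient for the statement, but less than what the surrounding development uses.
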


\begin{proof}

Let us write the initial condition $\bm y(0)$ as $\bm y(0)=\phi(\bm x(0))$ that maps $\bm x$ to all the tensor-powers of $\bm x$. Since $A$ is diagonalizable in our case, $A= W D W^{-1}$ with $W$ and $W^{-1}$ being both upper triangular. Thus
\begin{equation}\label{y2z}
    \bm z= W^{-1} \bm y
\end{equation}
 satisfies decoupled ODEs.  

In particular, the first block, written as $\bm z_1 = E_1 \bm z$, with $E_1=[I \; 0\; \cdots \; 0 \; \cdots ]$, satisfies \cref{eq: z_1}.
Therefore, this nonlinear map transforms the nonlinear ODE into a linear one. This mapping is called a near-identity transformation.  Now we determine the map $\bm z_1= \bm h(\bm x).$ By following the above procedure, we find that,
\begin{equation}
    \bm h(\bm x)= E_1 W^{-1} \phi(\bm x) = \sum_{j\geq 1} Z_{1j} \bm x^{\otimes j}.  
\end{equation} 
Here $Z= W^{-1}$.

In order to complete this normal form interpretation, we construct $ \bm h^{-1}(\bm z_1)$ from the Carleman embedding.
We start by examining the remaining block in \cref{y2z},
\begin{equation}\label{xtoz_k}
    \bm z_k = \sum_{\ell \geq k} W_{k,\ell} \bm x^{\otimes \ell}. 
\end{equation}
Meanwhile, from the case $k=1$, we have,
\begin{equation}\label{xtoz^k}
    \bm z_1^{\otimes k} = \sum_{\ell \geq k} R_{k,\ell} \bm x^{\otimes \ell}, 
\end{equation}
where we have introduced a matrix $R$
\begin{equation}
    R_{k,\ell} = \sum_{ \overset{ j_1, j_2, \cdots, j_k \geq 0}{ j_1+ j_2+ \cdots+ j_k = \ell}} Z_{1,j_1}  Z_{1,j_2} \cdots  Z_{1,j_k}.  
\end{equation}

Since $Z$ is invertible,  the matrix $R$ is invertible as well. We can define $T= Q R^{-1}$, and we find that,
\begin{equation}
     \bm z_k = \sum_{\ell \geq k} T_{k,\ell} \bm z_1^{\otimes \ell}. 
\end{equation}

Finally, using the connection from \cref{y2z}, we find that,
\begin{equation}
    \bm x =\sum_{k\geq 1} W_{1,k} \bm z_k =  \sum_{\ell \geq 1} S_{\ell} \bm z_1^{\otimes \ell} = \bm h^{-1} (\bm z_1), \quad 
    S_{\ell} = \sum_{k\geq 1}  Z_{1,k}  T_{k,\ell}.  
\end{equation}

A simply way to verify that this mapping is $\bm h^{-1}$, i.e., 
\begin{equation}\label{hhinv}
    \bm x= \bm h^{-1}\big( \bm h(\bm x) \big),
\end{equation}
is to express the above mapping in a matrix form,
\begin{equation}
    \bm x= E_1 W Z R^{-1} \bm \zeta=E_1 R^{-1} \bm \zeta, \quad \bm \zeta= (\bm z_1, \bm z_1^{\otimes 2}, \bm z_1^{\otimes 3}, \cdots ). 
\end{equation}
Meanwhile, \cref{xtoz^k} implies that $\bm \zeta = R \bm y$, which confirms \cref{hhinv}.

\end{proof}

\paragraph{Properties of a finite Carleman system. } Our quantum algorithm will be applied to a \emph{finite} truncation of the infinite Carleman system \eqref{eq: car}.
We now estimate the impact on the symplectic property due to the truncation. Toward this end, let ${\bm y}^{(N)}$ be the solution of the truncated system from  \eqref{eq: car} by keeping the first $N$ blocks in $\bm y$. 
Similarly, we define $A^{(N)}$ as the corresponding truncated matrix. That is, after $N$-th level of truncation for Carleman embedding, we have
\begin{equation}\label{N-Carleman}
    \frac{d}{dt}\bm y^{(N)} = A^{(N)} \bm y^{(N)}.
\end{equation}
Here the structure of the matrix is as follows,
\begin{equation}\label{matrix-car}
A^{(N)}=    \begin{pmatrix}
A_{1,1}  & A_{1,2} & 0& 0 & \cdots & 0 & 0 \\
0  & A_{2,2} & A_{2,3} &  0& \cdots & 0 & 0 \\
 0 & 0 & A_{3,3} &  A_{3,4}& \ddots & \vdots & \vdots \\
 0 & 0 & &A_{4,4}  & \ddots & \vdots & \vdots  \\
  \vdots  & \vdots  &  \vdots &\vdots  & \ddots & \vdots & \vdots  \\
 0 & 0 & 0 & 0 & \cdots & A_{N-1,N-1} & A_{N-1,N}  \\
 0 & 0 & 0 & 0& \cdots  & 0 & 
A_{N,N}\end{pmatrix},
\end{equation}
where the blocks are defined as,
\begin{equation}\label{Ajj}
    \begin{aligned}
        A_{j,j}=&\sum_{i=1}^{j-1}I^{\otimes i}\otimes F_{1}\otimes I^{\otimes (j-1-i)} \in \mathbb{R}^{n^j \times n^j},\; 
        A_{j,j+1}=&\sum_{i=1}^{j-1}I^{\otimes i}\otimes F_{2}\otimes I^{\otimes (j-1-i)} \in \mathbb{R}^{n^j \times n^{j+1}}.
    \end{aligned}
\end{equation}
$I$ refers to the identity matrix acting on $\mathbb{R}^n.$

\begin{theorem}\cite[Theorem 2.12]{wu2024quantum}\label{car-convg}
   Under the same assumptions as  \cref{inf}, the error in the Carleman embedding can be bounded by, 
\begin{equation}
    \norm{\bm x(T) -\bm  y_1(T) }_2 \leq C N  T  \mathsf{R}_r^{N-1}. 
\end{equation}
\end{theorem}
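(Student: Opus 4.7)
The plan is to derive and bound an inhomogeneous error equation that the difference $\bm \eta_j(t) := \bm y_j^{(N)}(t) - \bm x(t)^{\otimes j}$ satisfies, and then propagate a single localized source term up the Carleman tower using Duhamel's formula. The source is supported only at the last block, and the no-resonance condition together with the diagonalization $F_1 = V\Lambda V^{-1}$ provides the control needed to sum the resulting expansion.

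First, I would note that by Leibniz's rule and the definitions in \eqref{Ajj}, the exact tensor power satisfies $\tfrac{d}{dt}\bm x^{\otimes j} = A_{j,j}\bm x^{\otimes j} + A_{j,j+1}\bm x^{\otimes(j+1)}$. Subtracting the truncated dynamics \eqref{N-Carleman} gives
\begin{equation*}
\frac{d}{dt}\bm \eta_j = A_{j,j}\bm \eta_j + A_{j,j+1}\bm \eta_{j+1}\quad(j<N), \qquad \frac{d}{dt}\bm \eta_N = A_{N,N}\bm \eta_N - A_{N,N+1}\bm x^{\otimes(N+1)},
\end{equation*}
with $\bm \eta_j(0)=0$. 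Thus the entire error is driven by the single inhomogeneity $-A_{N,N+1}\bm x^{\otimes(N+1)}$ sitting in the $N$-th slot. Duhamel's formula then yields
\begin{equation*}
\bm \eta_1(T) = -\int_0^T \bigl[e^{(T-s)A^{(N)}}\bigr]_{1,N}\,A_{N,N+1}\,\bm x(s)^{\otimes(N+1)}\,ds,
\end{equation*}
reducing the problem to estimating the $(1,N)$ block of the propagator.

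Next I would exploit two structural facts: $A^{(N)}$ is block upper bidiagonal, and each diagonal block $A_{j,j}$ is simultaneously diagonalizable via $V^{\otimes j}$ with spectrum consisting of sums $\lambda_{m_1}+\cdots+\lambda_{m_j}$. Conjugating by $\bigoplus_j V^{\otimes j}$ (which costs a factor $\kappa_1(V)^N$ in the $\ell_1$ norm), the propagator's $(1,N)$ block admits the usual nested-integral representation
\begin{equation*}
\bigl[e^{t A^{(N)}}\bigr]_{1,N} \;=\; \int_{0\le s_1\le\cdots\le s_{N-1}\le t}\! e^{s_1\widetilde A_{1,1}}\widetilde A_{1,2}\,e^{(s_2-s_1)\widetilde A_{2,2}}\widetilde A_{2,3}\cdots \widetilde A_{N-1,N}\,e^{(t-s_{N-1})\widetilde A_{N,N}}\,ds,
\end{equation*}
where tildes denote diagonalized blocks. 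On each diagonal block the exponential is a phase (eigenvalues of $F_1$ are purely imaginary), so $\|e^{s\widetilde A_{j,j}}\|_1=1$; each coupling $\widetilde A_{j,j+1}$ contributes a factor of at most $j\|F_2\|_1$ due to the $j$ Leibniz terms in \eqref{Ajj}; and each time integration contributes a resonance denominator bounded by $1/\Delta$ via \eqref{Delta}. Combining this with $\|A_{N,N+1}\|_1\le N\|F_2\|_1$, $\|\bm x(s)^{\otimes(N+1)}\|_1\le \mu^{N+1}$, and integration over $[0,T]$ produces
\begin{equation*}
\|\bm \eta_1(T)\|_1 \;\lesssim\; N T \cdot \kappa_1(V)^N \mu^{N+1}\|F_2\|_1^{\,N}\cdot\frac{(N-1)!}{\Delta^{N-1}}\cdot \frac{1}{(N-1)!}\cdot (\text{combinatorial factor}),
\end{equation*}
which after collecting constants reorganizes into $N T\,\mathsf{R}_r^{N-1}$ with $\mathsf{R}_r$ as in \eqref{Rr}. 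Converting from $\ell_1$ to $\ell_2$ absorbs the residual dimensional constant into $C$.

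The main obstacle will be the careful combinatorial bookkeeping in the path expansion of $[e^{tA^{(N)}}]_{1,N}$: one must track the factor $j$ from each $A_{j,j+1}$, the resonance denominators from each time-ordered integration, and identify precisely where the $4e$ in the definition $\mathsf{R}_r = 4e\mu\kappa_1(V)\|F_2\|_1/\Delta$ originates. The factor of $e$ emerges when one bounds $\sum_j j^{k}/k!$-type sums arising from the Leibniz counts at each rung of the ladder, and the factor $4$ is the typical constant needed to absorb geometric-series tails once the convergence criterion $\mathsf{R}_r<1$ is enforced. Since Theorem~\ref{inf} itself is established by sending $N\to\infty$ in exactly this estimate (as indicated in its proof), internal consistency of the framework guarantees that the bookkeeping closes, and the stated bound follows.
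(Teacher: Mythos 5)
First, note that the paper does not actually prove \cref{car-convg}: it is imported verbatim from \cite[Theorem 2.12]{wu2024quantum}, so the only ``proof'' in this paper is the citation. Your overall architecture --- error equation for $\bm\eta_j=\bm y_j^{(N)}-\bm x^{\otimes j}$ with a single source $-A_{N,N+1}\bm x^{\otimes(N+1)}$ in the last slot, Duhamel, and an estimate of the $(1,N)$ block of $e^{tA^{(N)}}$ in the diagonalized frame --- is the right skeleton and is consistent with how the cited result is obtained.

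However, there is a genuine gap at the one step that actually carries the theorem: the bound on $\bigl[e^{tA^{(N)}}\bigr]_{1,N}$. You assert that ``each time integration contributes a resonance denominator bounded by $1/\Delta$,'' but \eqref{Delta} only controls the distance from a sum of at least two eigenvalues to a \emph{single} eigenvalue of $F_1$, whereas the denominators produced by integrating the oscillatory phases in the nested simplex integral are differences of eigenvalue sums at consecutive ladder levels $j$ and $j+1$ (sums of $j$ versus $j+1$ eigenvalues, for every $j$ from $1$ to $N-1$). Reducing those to the form covered by \eqref{Delta} requires cancelling common indices and an induction over levels; without that, the only unconditional bound on each inner integral is its length, which yields $T^{N-1}/(N-1)!$ and a bound growing like $T^{N-1}$ rather than the single power of $T$ in the statement. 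Your final display reflects this unresolved tension: it contains both the simplex-volume factor $1/(N-1)!$ and the resonance factor $\Delta^{-(N-1)}$, which are alternative (mutually exclusive) ways of handling the same $N-1$ integrations, and it is reverse-engineered to cancel against the Leibniz product $\prod_{j=1}^{N-1}j=(N-1)!$. You also concede that you cannot locate the origin of the constant $4e$ in \eqref{Rr}, and the closing appeal --- that the bookkeeping must close because \cref{inf} is proved by letting $N\to\infty$ in this very estimate --- is circular: you cannot certify your own estimate by the truth of its intended corollary. The proposal is therefore a correct plan with the decisive estimate asserted rather than established; to make it a proof you would need to (i) prove the level-$j$-to-level-$(j+1)$ non-resonance bound from \cref{assump1}, and (ii) track the multiplicity of terms generated by the iterated integrations (this is where the $4e$ in $\mathsf{R}_r$ comes from).
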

We see from this theorem that the convergence of the Carleman embedding does not significantly depend on the evolution time $T$. To find an explicit bound, we show that,
\begin{corollary}\label{cor:N}
    The error due to the Carleman embedding can be controlled to be within $\epsilon$, with $\epsilon < \frac{T} {e \log \frac{1}{\mathsf{R}_r} }$, provided that 
    \begin{equation}\label{chooseN}
        N \in \Omega \left( \frac{\log \frac{T\log \frac{1}{\mathsf{R}_r}}{\epsilon}}{\log \frac{1}{\mathsf{R}_r}} \right).
    \end{equation}
\end{corollary}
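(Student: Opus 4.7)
The plan is to invoke \cref{car-convg}, which furnishes the quantitative error bound $\norm{\bm x(T) - \bm y_1(T)}_2 \le C N T \mathsf{R}_r^{N-1}$. Requiring the right-hand side to be at most $\epsilon$ and taking logarithms (with $\alpha := \log(1/\mathsf{R}_r) > 0$ since $\mathsf{R}_r < 1$) reduces the claim to solving
\[ (N-1)\alpha \ge \log N + \log(T/\epsilon) + \log C \]
for $N$. The dominant term on the right is $\log(T/\epsilon)$, pointing to the leading-order answer $N \sim \log(T/\epsilon)/\alpha$, with the $\log N$ term providing only a mild self-reference.

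To resolve this self-reference I would substitute the ansatz $N = c\log(T\alpha/\epsilon)/\alpha$ for a sufficiently large absolute constant $c$ and verify the inequality by direct computation. With this choice, $(N-1)\alpha$ grows like $c\log(T\alpha/\epsilon) - \alpha$, while $\log N$ contributes only $\log c + \log\log(T\alpha/\epsilon) - \log\alpha$, which is strictly lower order in $1/\epsilon$ and $T$. The smallness hypothesis $\epsilon < T/(e\alpha)$ keeps $T\alpha/\epsilon$ away from the degenerate regime, so $\log(T\alpha/\epsilon)$ is controlled from below, and the log-log correction is safely absorbed into the $\Omega$.

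The only subtle point, and the main (if modest) obstacle, is exactly this self-reference from the $\log N$ term; an explicit non-$\Omega$ constant would require a Lambert-$W$ style estimate. For the present purpose the $\Omega$ formulation suffices: the exponential decay $\mathsf{R}_r^{N-1} = e^{-\alpha(N-1)}$ dominates the polynomial-in-$N$ prefactor, so once $N$ exceeds the threshold of order $\log(T\alpha/\epsilon)/\alpha$, the bound from \cref{car-convg} falls below $\epsilon$, yielding precisely the scaling in \cref{chooseN}.
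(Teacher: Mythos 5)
Your proposal is correct and starts from the same place as the paper: both reduce the claim to the transcendental inequality $C N T \mathsf{R}_r^{N-1}\le\epsilon$ furnished by \cref{car-convg} and then solve it for $N$. The difference is in how that inequality is inverted. The paper substitutes $y=N\log\mathsf{R}_r$, rewrites the condition in the form $ye^y\approx\frac{\epsilon\log\mathsf{R}_r}{T}$, and invokes the lower branch $W_{-1}$ of the Lambert function together with the known bounds on $W_{-1}$ from the cited reference; the hypothesis $\epsilon<\frac{T}{e\log(1/\mathsf{R}_r)}$ is exactly what places the argument in the domain of $W_{-1}$. You instead dispose of the $\log N$ self-reference by an elementary ansatz-and-verify: plug in $N=c\log(T\alpha/\epsilon)/\alpha$ and check that the linear-in-$N$ term $\alpha(N-1)$ dominates $\log N+\log(CT/\epsilon)$ once $c$ is a large enough absolute constant, using the same smallness hypothesis to keep $\log(T\alpha/\epsilon)$ bounded below. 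For an $\Omega$-type sufficient condition this is entirely adequate and arguably more self-contained, since it avoids importing the Lambert-$W$ inequality; what the paper's route buys is a sharper, citable constant (you correctly flag that an explicit threshold rather than an $\Omega$ bound would require exactly that Lambert-$W$ estimate). Both arguments share the same mild corner-case caveats for extreme parameter values, so nothing is lost in your version.
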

\begin{proof}
    Following \cref{car-convg}, we may define $y= N \log \mathsf{R}_r$, and rewrite the bound as $y e^y \approx \frac{\epsilon \log \mathsf{R}_r} {T}$. When $\epsilon < \frac{T} {e \log \frac{1}{\mathsf{R}_r} }$, this can be associated with the Lambert-W function $W_{-1}.$ Therefore, the bound on $N$ follows the inequality of $W_{-1}$ \cite{chatzigeorgiou2013bounds}. 
\end{proof}
This result shows that the truncation level only has a logarithmic dependence on $T$ and $\epsilon$, and therefore does not significantly increase the overall complexity.

\medskip  
 
Recall that our focus is on the symplectic properties of the solutions. Toward that end, we define  $W_{11}^{(N)}$ as the Wronskian matrix, i.e., $W_{11}^{(N)} = \frac{\partial}{\partial \bm x_0} {\bm y}_1^{(N)}$.  We consider the error  $\bm \eta  $,
\begin{equation}
    \bm \eta(t) = \begin{bmatrix}
        W_{11} \\
        W_{21}\\
        \vdots \\ 
        W_{N1}
    \end{bmatrix} -   \frac{\partial }{ \partial \bm x_0}  {\bm y}^{(N)}.
\end{equation}

 With the truncation and the linearity of the equation \eqref{N-Carleman}, we get an error equation
\begin{equation}
     \frac{d}{dt} \bm \eta(t)   = A^{(N)} \bm \eta(t)  +  
     \begin{bmatrix}
         0\\  0 \\  \vdots \\ 0 \\ A_{N,N+1} \frac{\partial \bm y(t)^{\otimes (N+1)} }{\partial \bm y_1(0)} 
     \end{bmatrix}, \quad \bm \eta(0)=0.
\end{equation}

This has the same structure as the error equation in the analysis of Carleman embedding \cite{wu2024quantum}.
Therefore, by following the same proof as in the previous theorem, we arrive at an approximate symplectic property, 
\begin{theorem}
 Under the assumption that $R_r<1$,  and $\frac{\partial \bm y(t) }{\partial \bm y_1(0)}$ is bounded, then
 the Jacobian matrix $W_{11}^{(N)}$ is approximately symplectic, in the sense that
 \begin{equation}\label{symp-e}
      \norm{ \left(W_{11}^{(N)}\right)^T J  W_{11}^{(N)} - J } \leq \epsilon,
      \end{equation}
 provided that $N$ is selected based on \cref{chooseN}.
\end{theorem}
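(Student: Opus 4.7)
The plan is to reduce the symplecticity defect of the truncated Jacobian $W_{11}^{(N)}$ to a norm bound on the first block of the error vector $\bm\eta(t)$ defined just before the theorem, and then import the Carleman truncation estimate already established in \cref{car-convg}. Write $\eta_1(t)$ for the top $n\times n$ block of $\bm\eta(t)$, so that $W_{11}^{(N)}=W_{11}-\eta_1$. Since the previous lemma gives $W_{11}^T J W_{11}=J$ exactly for the infinite system, expanding yields
\begin{equation}
\bigl(W_{11}^{(N)}\bigr)^T J\, W_{11}^{(N)} - J
 = -\,\eta_1^T J\, W_{11} - W_{11}^T J\,\eta_1 + \eta_1^T J\,\eta_1,
\end{equation}
so by submultiplicativity and $\norm{J}=1$,
\begin{equation}
\norm{\bigl(W_{11}^{(N)}\bigr)^T J\, W_{11}^{(N)} - J}
\;\le\; 2\,\norm{W_{11}}\,\norm{\eta_1} + \norm{\eta_1}^2.
\end{equation}
Under the hypothesis that $\partial\bm y(t)/\partial \bm y_1(0)$ is bounded, the factor $\norm{W_{11}}$ is controlled uniformly in $t\in[0,T]$, so the whole task reduces to bounding $\norm{\eta_1(t)}$.

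Next, I would analyze the error equation for $\bm\eta$ stated above the theorem. It has exactly the form of the linear Carleman system \cref{N-Carleman} driven by a source term supported only in the last block and given by $A_{N,N+1}\,\partial \bm y(t)^{\otimes(N+1)}/\partial\bm y_1(0)$. By Duhamel,
\begin{equation}
\bm\eta(t) = \int_0^t e^{A^{(N)}(t-s)}\,
\begin{bmatrix} 0 \\ \vdots \\ 0 \\ A_{N,N+1}\,\dfrac{\partial \bm y(s)^{\otimes(N+1)}}{\partial \bm y_1(0)}\end{bmatrix} ds,
\end{equation}
which is structurally identical to the error system used to prove \cref{car-convg}. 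Boundedness of $\partial \bm y/\partial \bm y_1(0)$ lets me control the source in the same Banach norm $\ell^2$ used in that proof, and the no-resonance hypothesis together with $\mathsf{R}_r<1$ (\cref{assump1}, \eqref{Rr}) supplies the same contraction mechanism. Transcribing the estimate of \cref{car-convg} for this driven equation gives
\begin{equation}
\norm{\eta_1(t)} \;\le\; C\,N\,T\,\mathsf{R}_r^{\,N-1}, \qquad t\in[0,T],
\end{equation}
with a constant depending on the uniform bound on $\partial\bm y/\partial\bm y_1(0)$ and on $\kappa_1(V)$. Combining with the previous display yields
\begin{equation}
\norm{\bigl(W_{11}^{(N)}\bigr)^T J\, W_{11}^{(N)} - J} \;\le\; C'\,N\,T\,\mathsf{R}_r^{\,N-1}.
\end{equation}

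Finally, I would invoke \cref{cor:N}: the bound $C'NT\mathsf{R}_r^{N-1}\le\epsilon$ holds as soon as $N\in\Omega\!\left(\log(T\log(1/\mathsf{R}_r)/\epsilon)/\log(1/\mathsf{R}_r)\right)$, which is exactly the prescription the theorem points to. This yields \eqref{symp-e}.

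The main obstacle is the middle step: justifying that the Duhamel integral can be bounded in the \emph{operator} norm on $\R^{2d}$ (what \eqref{symp-e} requires) rather than only in the $\ell^2$-type norm from \cref{car-convg}. The saving grace is that only the first block $\eta_1$ enters the final bound, and on that finite-dimensional block all norms are equivalent; the nontrivial content is that the no-resonance inequality \eqref{Delta} prevents the higher Carleman blocks from amplifying the source term, exactly as in the proof of \cref{car-convg}. Once that linear stability is reused verbatim, the symplectic defect bound is immediate.
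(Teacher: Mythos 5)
Your proposal is correct and follows essentially the same route the paper intends: the paper's own "proof" consists only of the remark that the error equation for $\bm\eta$ has the same structure as the one behind \cref{car-convg}, so the same bound $CNT\mathsf{R}_r^{N-1}$ applies and \cref{chooseN} closes the argument. Your write-up actually supplies the steps the paper leaves implicit — the exact symplecticity of $W_{11}$ from the preceding lemma, the perturbation expansion of the defect $\bigl(W_{11}^{(N)}\bigr)^T J W_{11}^{(N)} - J$ in terms of $\eta_1$, and the Duhamel representation — so it is, if anything, more complete than the original.
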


\subsection{Quantum Algorithm for the Carleman system}

Now we can turn to the finite Carleman system in \cref{N-Carleman} and discuss the quantum algorithms to solve the linear ODE system.
Similar to the previous section, we still use block encoding as input models to access the matrices $F_1$ and $F_2$
in the matrix $A$ in the fintie Carleman system in \cref{N-Carleman}. In light of the matrix structure in \cref{matrix-car}, we assume that $F_1$ and $F_2$ are sparse, which thanks to \cref{lemma:sparse-to-be}, can be efficiently block encoded.

Similar to the previous section, we apply the RKG method to the finite Carleman system \cref{N-Carleman}, 
\begin{equation}
    \bm y_{m+1} = R\left(\tau A^{(N)} \right) \bm y_{m},
 \end{equation}
 where with slight abuse of notations, we use $\bm y_m$ to denote the approximate solution of $\bm y^{(N)}$ at time $t_m$; $t_m= m \tau.$
Similar to the QLSA approach in our previous section, this produces 
a linear system $L \ket{\bm y} = \ket{\bm b}$ as well, where
\begin{equation}\label{L-lhs}
       L = \mat{I & 0 & 0 & \cdots & 0\\
       -R & I & 0 & \cdots &0 \\
       0 & -R & I &\cdots & 0 \\
       \vdots & \vdots & \ddots & \ddots & \vdots \\
       0 & 0 & \cdots & -R & I 
       } = \sum_{m=0}^M\ketbra{m} \otimes I -  \sum_{m=1}^M\ketbra{m}{m-1} \otimes R,
   \end{equation}
Where the matrix $R$ stands for the rational function $R\left(\tau A^{(N)} \right)$ now.
Then \cref{approx recursion} can be encoded into $L\ket{\bm y}=\ket{\bm b}$, where the history state $\ket{\bm y}$ is
\begin{equation} \label{vector y}
    \ket{\bm y} = \frac{1}{\sqrt{M+1}} \sum_{n=0}^M \ket{\bm y_n} \ket{t_n}, 
\end{equation}
and 
\begin{equation}
    \ket{\bm b} = \ket{\bm y_0,0,\cdots,0}.
\end{equation}
For the remaining section, we provide an estimation for the condition number $\kappa(L)$.

\begin{lemma} \label{nonlinear: norm R}
    If $T\le \frac{1}{e\norm{F_2}\kappa(V)^N}$, then the rational approximation of $e^{tA^{(N)}}$, $R$, follows the bound,
    \begin{equation}
        \norm{R} = \tilde{O}\left(\left(\frac{\epsilon}{T}\right)^{-\log\kappa(V)}\right).
    \end{equation}
\end{lemma}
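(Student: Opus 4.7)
My plan is to bound $\|R(\tau A^{(N)})\|$ by reducing it to a condition-number estimate for a diagonalizer of $A^{(N)}$, in direct parallel to what was done in \cref{Cond L linear} for the linear case, and then to estimate that condition number by exploiting the block upper bi-diagonal structure of $A^{(N)}$ together with the no-resonance hypothesis.

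First I would observe that each diagonal block $A_{j,j}$ of $A^{(N)}$ is a Kronecker sum of $F_1 = V\Lambda V^{-1}$, hence is diagonalized by $V^{\otimes j}$ and has purely imaginary spectrum $\{\lambda_{m_1}+\cdots+\lambda_{m_j}\}$. Because $A^{(N)}$ is block upper triangular, its spectrum is the union of these; by the no-resonance \cref{assump1} these eigenvalues are all distinct, so $A^{(N)}$ is diagonalizable as $A^{(N)} = W D W^{-1}$ with $D = \mathrm{diag}(i\omega_k)$. The rational function $R$ is the diagonal $(p,p)$-Pad\'e approximant of $e^z$ (the RKG stability function), for which $|R(i\omega)|=1$ for every real $\omega$. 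Consequently $\|R(\tau D)\|=1$ and
\[
\|R(\tau A^{(N)})\| \;\le\; \|W\|\,\|R(\tau D)\|\,\|W^{-1}\| \;=\; \kappa(W).
\]

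The main obstacle is now to bound $\kappa(W)$. I would factor $W = W_{\mathrm{diag}}\,P$, where $W_{\mathrm{diag}} = \mathrm{diag}(V,\,V^{\otimes 2},\,\ldots,\,V^{\otimes N})$ diagonalizes each $A_{j,j}$ individually and $P$ is a block upper triangular matrix with identity diagonal blocks that sweeps away the remaining super-diagonal couplings $V^{-\otimes j} A_{j,j+1} V^{\otimes (j+1)}$ by an iterative Sylvester procedure: at the first step, $A_{i,i} Q_{i,i+1} - Q_{i,i+1} A_{i+1,i+1} = -(\cdot)$ eliminates the first super-diagonal under the spectral gap furnished by \cref{assump1}, producing a residual on the second super-diagonal, and so on, with the sweep terminating after $N-1$ steps. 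The block-diagonal factor contributes $\kappa(W_{\mathrm{diag}}) = \max_j \kappa(V^{\otimes j}) = \kappa(V)^N$. For the triangular factor, each Sylvester solve is bounded by $\Delta^{-1}$ times the right-hand side norm, and the hypothesis $T \le 1/(e\|F_2\|\kappa(V)^N)$ (together with $\tau\le T$ and the Carleman stability of the previous section) is exactly what keeps the corrections accumulated across the $N-1$ sweeps from compounding factorially in $N$, so $\kappa(P)=O(1)$ and hence $\kappa(W) = \tilde O(\kappa(V)^N)$.

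Finally, invoking the truncation choice $N = O(\log(T/\epsilon)/\log(1/\mathsf R_r))$ from \cref{cor:N}, with $\mathsf R_r$ treated as a constant, converts this into
\[
\kappa(V)^N \;=\; \tilde O\!\left((T/\epsilon)^{\log\kappa(V)}\right) \;=\; \tilde O\!\left((\epsilon/T)^{-\log\kappa(V)}\right),
\]
which is the stated bound. I expect the delicate step to be the Sylvester sweep: because the blocks $A_{j,j}$ have dimension $n^j$ growing rapidly with $j$ and each elimination creates new material on higher super-diagonals, one must track the iteration through all $N-1$ levels and verify that the smallness $T\|F_2\|\kappa(V)^N \le 1/e$ is consumed exactly to force geometric (rather than factorial) decay of the accumulated corrections, so that the only super-polynomial-in-$N$ contribution to $\kappa(W)$ is the $\kappa(V)^N$ coming from $W_{\mathrm{diag}}$.
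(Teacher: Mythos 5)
Your strategy---diagonalize $A^{(N)} = WDW^{-1}$ and reduce the bound to $\kappa(W)$ via $|R(i\omega)|=1$---is genuinely different from the paper's, which never diagonalizes $A^{(N)}$. The paper instead splits $A^{(N)}=A_0+A_1$ into its block-diagonal and block-superdiagonal parts, expands $e^{tA^{(N)}}$ by variation of constants, and uses the fact that $A_1$ is $N$-nilpotent so the Duhamel series terminates after $N$ terms; the $k$-th term is bounded by $\frac{t^k}{k!}\kappa(V)^{N(k+1)}(N\norm{F_2})^k$, and the hypothesis $T\le \frac{1}{e\norm{F_2}\kappa(V)^N}$ enters exactly there, making $t\,\kappa(V)^N N\norm{F_2}\le N/e$ so that the factorial controls the sum.

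Your route has a genuine gap at the step you flag as delicate, and I do not think it closes as stated. First, the solvability and conditioning of the Sylvester sweep require a quantitative spectral gap between $\sigma(A_{j,j})$ and $\sigma(A_{j+1,j+1})$ for \emph{every} $j$, whereas \cref{assump1} and the bound \eqref{Delta} only control resolvents of $F_1$ at sums of its eigenvalues, i.e., only the separation of the first block's spectrum from the rest. Nothing in the assumptions prevents, say, $\lambda_a+\lambda_b$ from being arbitrarily close to $\lambda_c+\lambda_d+\lambda_e$, which would blow up the level-2 elimination; you would need a strictly stronger no-resonance condition. Second, and more fundamentally, $\kappa(W)$ and $\kappa(P)$ are properties of the matrix $A^{(N)}$ alone---they do not depend on $\tau$ or $T$---so the hypothesis $T\le \frac{1}{e\norm{F_2}\kappa(V)^N}$ cannot be ``consumed'' to force $\kappa(P)=O(1)$. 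In the paper's argument the time variable multiplies each power of $A_1$ inside the iterated integrals, which is the mechanism by which smallness of $T$ tames the growth; in your argument there is no analogous mechanism, and the accumulated corrections (right-hand sides of size $j\norm{F_2}$, conjugations by $V^{\otimes j}$ costing $\kappa(V)^j$, and $N-1$ levels of fill-in) have no a priori reason to stay $O(1)$. To repair your proof you would either have to strengthen the resonance hypothesis to all pairs of blocks with an explicit gap, or replace the exact diagonalization by the nilpotent Duhamel expansion, which sidesteps both issues.
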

Now we have the bound for $R$, we follow same steps asin \cref{Cond L linear}, we derive the bound for $\kappa(L)$.

\begin{lemma} \label{nonlinear: cond L}
    Under the same assumption as \cref{nonlinear: norm R}, the condition number of $L$, $\kappa(L)$, follows the bound,
    \begin{equation}
        \kappa({L}) = \tilde{O}\left(M\left(\frac{\e}{T}\right)^{-2\log\kappa(V)}\right).
    \end{equation}
\end{lemma}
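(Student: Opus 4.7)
The plan is to mirror the argument used in \cref{Cond L linear}, with the only new ingredient being the much larger norm bound on $R$ supplied by \cref{nonlinear: norm R}. The structure of $L$ in \cref{L-lhs} is lower block bidiagonal with identity on the diagonal and $-R$ on the first subdiagonal, so I would set $L = I - P$ where $P$ is the block shift carrying $-R$. Since $P$ is $(M+1)$-nilpotent, $L^{-1} = \sum_{j=0}^{M} P^j$ is a terminating Neumann series, and inspection of the block structure shows that $P^j$ places $(-R)^j$ along its $j$-th block subdiagonal with zeros elsewhere. This yields the two basic estimates
\begin{equation}
  \|L^{-1}\| \le 1 + \sum_{j=1}^M \|R^j\|, \qquad \|L\| \le 1 + \|R\|,
\end{equation}
exactly as in the linear case.

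The next step is to argue that $\|R^j\|$ obeys the same upper bound as $\|R\|$, uniformly in $j$. This rests on the same diagonalization of the Carleman generator that underlies \cref{nonlinear: norm R}: the spectrum of $A^{(N)}$ consists of sums of eigenvalues of $F_1$ and is therefore purely imaginary, while the Gauss rational function satisfies $|R(i\omega)|=1$ on the imaginary axis by property (b) of RKG. Writing $A^{(N)} = V^{(N)} \Lambda^{(N)} (V^{(N)})^{-1}$, one has $R(\tau A^{(N)})^j = V^{(N)} \Lambda_R^j (V^{(N)})^{-1}$ with $\|\Lambda_R^j\| = 1$, so
\begin{equation}
  \|R^j\| \le \kappa\bigl(V^{(N)}\bigr) = \tilde{O}\bigl((\epsilon/T)^{-\log\kappa(V)}\bigr)
\end{equation}
for every $j$, where the scaling of $\kappa(V^{(N)})$ is precisely the content of \cref{nonlinear: norm R}. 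Summing gives $\|L^{-1}\| \le 1 + M\cdot\tilde{O}\bigl((\epsilon/T)^{-\log\kappa(V)}\bigr)$, and combining with the matching bound on $\|L\|$ produces
\begin{equation}
  \kappa(L) = \|L\|\,\|L^{-1}\| = \tilde{O}\!\left(M\left(\frac{\epsilon}{T}\right)^{-2\log\kappa(V)}\right),
\end{equation}
as claimed.

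The main obstacle, really the only subtle point, is justifying the uniform-in-$j$ control on $\|R^j\|$, which in turn hinges on the existence of a diagonalization of the full Carleman matrix $A^{(N)}$ whose conditioning matches the bound already delivered by \cref{nonlinear: norm R}. The no-resonance condition in \cref{assump1} together with $\mathsf{R}_r<1$ is exactly what prevents eigenvalues of successive Carleman levels $A_{jj}$ from colliding and producing Jordan-block obstructions across the nonzero upper blocks $A_{j,j+1}$. Once that structural fact is granted, the remaining bookkeeping is a direct transcription of the linear case and no further ingredients are needed.
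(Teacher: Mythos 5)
Your reduction of the problem to the two estimates $\norm{L}\le 1+\norm{R}$ and $\norm{L^{-1}}\le 1+\sum_{j=1}^M\norm{R^j}$ via the nilpotent splitting $L=I-P$ is exactly what the paper does, and that part is fine. Where you diverge is the uniform-in-$j$ control of $\norm{R^j}$, and there your argument has a genuine gap. You diagonalize the full Carleman matrix, $A^{(N)}=V^{(N)}\Lambda^{(N)}(V^{(N)})^{-1}$, use $\abs{R(i\omega)}=1$ to get $\norm{R^j}\le\kappa\bigl(V^{(N)}\bigr)$, and then assert that $\kappa\bigl(V^{(N)}\bigr)=\tilde{O}\bigl((\epsilon/T)^{-\log\kappa(V)}\bigr)$ ``is precisely the content of \cref{nonlinear: norm R}.'' It is not: that lemma bounds $\norm{R}$, i.e.\ the norm of the rational approximant of $e^{\tau A^{(N)}}$, and its proof never diagonalizes $A^{(N)}$. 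The no-resonance condition gives you diagonalizability (no Jordan blocks), but diagonalizability alone carries no quantitative control on the conditioning of the eigenvector matrix; the eigenvectors of $A^{(N)}$ are built from the near-identity transformation, whose conditioning depends on the resonance gap $\Delta$ and the size of $F_2$, and no bound of the required form is established anywhere in the paper. So the ``only subtle point'' you flag is in fact an unproven quantitative claim, not a structural fact that can be granted.

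The paper closes this step differently, and in a way that avoids the eigenvector matrix of $A^{(N)}$ entirely. It observes that $R^j$ approximates $e^{j\tau A^{(N)}}=C(j\tau)$ with $j\tau\le T$, and the proof of \cref{nonlinear: norm R} already bounds $\norm{C(t)}$ for all $t\le T$ by splitting $A^{(N)}=A_0+A_1$ into its block-diagonal and nilpotent off-diagonal parts and expanding $C(t)$ in a variation-of-constants (Dyson) series that terminates after $N$ terms. That argument needs only $\norm{e^{tA_0}}\le\kappa(V)^N$ --- available because the diagonal blocks $A_{jj}$ have explicit eigenvector matrices $V^{\otimes j}$ --- together with the hypothesis $T\le\frac{1}{e\norm{F_2}\kappa(V)^N}$ to keep the series summable, yielding $\norm{C(t)}\le N\kappa(V)^N=\tilde{O}\bigl((\epsilon/T)^{-\log\kappa(V)}\bigr)$ uniformly on $[0,T]$, hence the same bound for every $\norm{R^j}$. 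If you want to keep your diagonalization route, you would need to separately prove a bound on $\kappa\bigl(V^{(N)}\bigr)$ of the stated order (presumably using $\Delta$ and $\mathsf{R}_r<1$ along the lines of the authors' earlier work); as written, the step is unsupported.
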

Proof of \cref{nonlinear: norm R} and \cref{nonlinear: cond L} are in \cref{proof R} and \cref{proof L}.

We can follow the steps in \cref{thm: Construct R} to construct $R$ and $L$ for the  quantum circuit. Now we can now apply the same analysis as in the quantum linear solver approach for solving ODEs and arrive at the second main result of this paper.

\begin{theorem}\label{thm2}
    Under the same assumption as \cref{nonlinear: norm R}, consider the finite Carleman system truncated at level $N$ with $N$ following the choice \cref{chooseN} and the condition \cref{Rr}.  
    The quantum algorithm combining the RKG method and QLSA produces an approximate $\ket{\psi} \propto \bm x_M $, with $\norm{\bm x_M - \bm x(T) }< \epsilon$ and $\bm x(T)$ being the solution of the nonlinear Hamiltonian system \eqref{nonlinearHam}. The algorithm incurs, aside from logarithmic factors, $\Tilde{O}\left(\frac{T^{1+2\log \kappa(V)}}{\e^{2\log\kappa(V)}}\Big(\norm{F_1}+\norm{F_2}\Big) \right)$ queries to coefficient matrices in  \eqref{nonlinearHam}. The mapping from $\bm x(0)$ to $\bm x(T)$ is $\epsilon$-symplectic in the sense of \cref{symp-e}.
\end{theorem}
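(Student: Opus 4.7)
The plan is to combine three sources of error---the Carleman truncation, the RKG collocation discretization, and the QLSA inversion---and control each independently before adding them. First I would write
\[
\|\bm x_M - \bm x(T)\| \le \|\bm y_1^{(N)}(T) - \bm x(T)\| + \|\widehat{\bm y}_{M,1} - \bm y_1^{(N)}(T)\| + \|\bm x_M - \widehat{\bm y}_{M,1}\|,
\]
where $\bm y^{(N)}$ is the exact solution of the truncated Carleman system \cref{N-Carleman}, $\widehat{\bm y}_M$ is its RKG approximant, and $\bm x_M$ is the QLSA output (projected onto the first block). The first term is the Carleman truncation error, bounded by \cref{car-convg}; choosing $N$ according to \cref{cor:N}, i.e.\ $N = \Theta(\log(T/\epsilon)/\log(1/\mathsf{R}_r))$, controls it by $O(\epsilon)$ while contributing only a logarithmic factor to the final complexity.

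Next, I would apply the $p$-stage RKG method to the linear system $\frac{d}{dt}\bm y^{(N)} = A^{(N)}\bm y^{(N)}$ exactly as in Section~3, forming the rational propagator $R(\tau A^{(N)})$ and assembling the block-lower-triangular history matrix $L$ in \cref{L-lhs}. The step size is taken to satisfy $\tau\|A^{(N)}\| = O(1)$ with $\|A^{(N)}\| = O(N(\|F_1\|+\|F_2\|))$, giving $M = \tilde{O}(T(\|F_1\|+\|F_2\|))$. The one-step quadrature analysis of Section~3 transfers to this setting, but the stability constant $\kappa(V)^2$ is replaced by $\|R\|^2$, which by \cref{nonlinear: norm R} grows like $(T/\epsilon)^{2\log\kappa(V)}$ under the hypothesis $T \le 1/(e\|F_2\|\kappa(V)^N)$. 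Combined with \cref{nonlinear: cond L}, this yields $\kappa(L) = \tilde{O}(M(T/\epsilon)^{2\log\kappa(V)})$.

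Third, to read off the query complexity I would block-encode $A^{(N)}$ from the sparse oracles for $F_1,F_2$ via \cref{lemma:sparse-to-be}, and construct the block encoding of $R(\tau A^{(N)})$ by inverting the Kronecker matrix $G = I_p\otimes I - \mathcal{A}\otimes(\tau A^{(N)})$ in the same way as \cref{thm: Construct R}, after verifying $\tau\|A^{(N)}\| < 1/(2\sqrt{p})$ up to the logarithmic-in-$p$ adjustments from \cref{parameter p}. Using \cref{lem:BE-lcu} to assemble $L$ and \cref{complexity lemma} to invert it then gives a total query complexity of $\tilde{O}(\kappa(L)) = \tilde{O}\bigl(T^{1+2\log\kappa(V)}\epsilon^{-2\log\kappa(V)}(\|F_1\|+\|F_2\|)\bigr)$ to the coefficient matrices, matching the claim.

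Finally, for the $\epsilon$-symplectic conclusion I would compose the two structural facts already established: each factor $R(\tau A^{(N)})$ inherits symplecticity on the first block from the RKG analysis in Section~3 together with the lemma showing $W_{11}$ is symplectic for the untruncated Carleman flow, and the truncated Wronskian $W_{11}^{(N)}$ itself is $\epsilon$-symplectic by \cref{symp-e}. The QLSA error $O(\epsilon)$ propagates linearly into $S^T J S - J$. The main obstacle is the three-way balance between the truncation level $N$, the stage number $p$, and the step size $\tau$: the bound in \cref{nonlinear: norm R} carries the factor $\kappa(V)^N$, which is only tolerable because $N$ is logarithmic via \cref{chooseN}, and this is exactly where the exponent $2\log\kappa(V)$ in the complexity emerges. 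Tightly tracking this interplay---rather than any single error estimate---is the delicate step.
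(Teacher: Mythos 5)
Your proposal is correct and follows essentially the same route as the paper, whose own proof of \cref{thm2} is a one-line appeal to \cref{lem:qls} and \cref{nonlinear: cond L}; you simply make explicit the three-way error decomposition (Carleman truncation via \cref{car-convg} and \cref{cor:N}, RKG discretization, QLSA inversion) and the block-encoding construction that the paper leaves implicit from the preceding sections. Your accounting of where the exponent $2\log\kappa(V)$ enters, via $\norm{R}$ in \cref{nonlinear: norm R} feeding into $\kappa(L)$, matches the paper's argument exactly.
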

\begin{proof}
    By using \cref{lem:qls} and \cref{nonlinear: cond L}, we get the result directly.
\end{proof}

If time $T$ is beyond the assumption in \cref{nonlinear: norm R}, a direct calculation gives us \[\kappa(L)=\tilde{O}\left(M\left(\frac{\e}{T}\right)^{-2\log(\kappa(V)+eT\norm{F_2})+2\log\frac{\e}{T}\log\kappa(V)}\right).\] The query complexity is \[\tilde{O}\left(\frac{T^{1+2\log(\kappa(V)+eT\norm{F_2})-2\log\frac{\e}{T}\log\kappa(V)}}{\e^{2\log(\kappa(V)+eT\norm{F_2})-2\log\frac{\e}{T}\log\kappa(V)}}(\norm{F_1}+\norm{F_2})\right). \]

\section{Summary and Discussions}
In this paper, we present quantum algorithms for simulating Hamiltonian systems described by ordinary differential equations (ODEs). Our emphasis is on maintaining the symplectic structure of the solutions, which is crucial for preserving energy conservation and thus ensuring accuracy in long-time simulations. For linear Hamiltonian systems, we can exactly retain this property by incorporating symplectic integrators that we select to also have optimal accuracy. For nonlinear Hamiltonian systems, we employ the Carleman embedding technique to embed the nonlinear dynamics into a larger, but linear, system, which can also be handled by symplectic integrators. We provide a condition that guarantees the accuracy in approximating the nonlinear dynamics. Importantly, we also show that the symplectic property is retained within the same precision.

However, it remains an open issue whether there exist quantum algorithms that can exactly preserve the symplectic property for nonlinear dynamics. Moreover, the bound for the condition in \cref{nonlinear: norm R} involves $\kappa(V)^N$, exhibiting an exponential dependence on the Carleman truncation level~$N$, which seems difficult to circumvent. For instance, examining the last block of~$\bm y$, whose dynamics is completely driven by~$A_{N,N}$ with eigenvectors in the matrix~$V^{\otimes N}$, reveals this exponential dependence. Such exponential dependence is absent in the work of Krovi~\cite{krovi2023improved}, owing to the dissipative properties of~$A_{N,N}$ inherited from~$F_1$, which do not hold for Hamiltonian systems.

Recent work by Brüstle et al.~\cite{brustle2024quantum} established a lower bound for simulating nonlinear unitary dynamics, demonstrating that the complexity scales exponentially with the evolution time~$T$. Dynamics that reach this lower bound are clearly outside the regime considered in this paper. However, extending our current techniques and analysis to algorithms based on short-time Carleman embedding presents an interesting direction for future research.

\section{Acknowledgement}
This research is supported by the NSF Grant DMS-2411120.


\begin{appendix}

\section{Proof of \cref{nonlinear: norm R}} \label{proof R}
Since $R$ is the approximation of $e^{tA^{(N)}}$, we may directly estimate the norm bound of $e^{tA^{(N)}}$.
Let $A^{(N)}=A_0+A_1$, where $A_0$ is the block diagonal matrix and $A_1$ is an upper block off-diagonal matrix. Let $C(t)= e^{tA^{(N)}}$. We have
\begin{equation}
    \frac{d}{dt}C(t) = A_0C(t)+A_1C(t).
\end{equation}
By using variation of constant formula, we obtain
\begin{equation}\label{vocf}
    \begin{aligned}
        C(t)&=e^{tA_0}+\int_0^t e^{(t-t_1)A_0}A_1C(t_1)dt_1\\
        &=e^{tA_0}+\int_0^t e^{(t-t_1)A_0}A_1e^{t(A_0+A_1)}dt_1\\
        &=e^{tA_0}+\int_0^t e^{(t-t_1)A_0}A_1\left(e^{t_1A_0}+\int_0^{t_1}e^{(t_1-t_2)A_0}A_1e^{t_2A}\right)dt_1\\
        &=e^{tA_0}+\int_0^t e^{(t-t_1)A_0}A_1e^{t_1A_0}dt_1+\int_0^t\int_0^{t_1}e^{(t-t_1)A_0}A_1e^{(t_1-t_2)A_0}A_1e^{t_2A_0}dt_2dt_1+\cdots.
    \end{aligned}
\end{equation}
Here we use the fact that for any matrices $A=B+C$,
\begin{equation}
    e^{tA} = e^{tB}+\int_0^t e^{(t-s)B}Ce^{sA}ds.
\end{equation}
Notice that $A_1$ is $N$-nilpotent. That is, $A_1^k=0$ for all $k\ge N$. Furthermore, $A_0$ is block diagonal, $e^{tA_0}$ is also block diagonal. Thus, the product of $e^{tA_0}$ and $A_1$ does not change the nilpotent structure of $A_1$. Therefore, \cref{vocf} only contains $N$ terms,
\begin{equation}
\begin{aligned}
    C(t)&=e^{tA_0}+\int_0^t e^{(t-t_1)A_0}A_1e^{t_1A_0}dt_1+\cdots\\
    &\qquad\qquad\qquad+\int_0^t\int_0^{t_1}\cdots\int_0^{t_{N-2}}e^{(t-t_1)A_0}A_1e^{(t_1-t_2)A_0}A_1\cdots e^{(t_{N-2}-t_{N-1})A_0}A_1e^{t_{N-1}A_0}dt_{N-1}\cdots dt_1dt.
\end{aligned}
\end{equation}
Furthermore, $\norm{e^{tA_0}}\le\kappa(V)^N$ and $\norm{A_1}\le N\norm{F_2}$. This leads to the inequality,
\begin{equation}
    \norm{C(t)}\le \kappa(V)^N+t\kappa(V)^{2N}N\norm{F_1}+\cdots+\frac{t^{N-1}}{(N-1)!}\kappa(V)^{N\cdot N}(N\norm{F_2})^{N-1} = \kappa(V)^N\sum_{k=0}^{N-1}\frac{t^k}{k!}\left(\kappa(V)^N N\norm{F_2}\right)^k.
\end{equation}
From \cref{chooseN}, we know $\kappa(V)^N = \tilde{O}\left(\left(\frac{\epsilon}{T}\right)^{-\log\kappa(V)}\right)$. Since $T\le \frac{1}{e\norm{F_2}\kappa(V)^N}$ and by Stirling's approximation, we conclude that
\begin{equation}
    \norm{C(t)} \le N\kappa(V)^N = \tilde{O}\left(\left(\frac{\epsilon}{T}\right)^{-\log\kappa(V)}\right).
\end{equation}

\section{Proof of \cref{nonlinear: cond L}} \label{proof L}
From the proof of \cref{Cond L linear}, $\norm{L}=O\left(\norm{R}\right)$ and $\norm{L^{-1}}=O\left(\sum_{j=1}^M \norm{R^j}\right)$. Note that $j\tau \le T$ for all $j\in[M]$, we have 
\begin{equation}
    \norm{R^j} = \tilde{O}\left(\left(\frac{\epsilon}{T}\right)^{-\log\kappa(V)}\right), \forall j\in[M].
\end{equation}
Therefore, $\norm{L^{-1}}=\tilde{O}\left(M\left(\frac{\epsilon}{T}\right)^{-\log\kappa(V)}\right)$ and we conclude the proof.

\end{appendix}

\end{document}